\newtheorem{definition}{Definition}
\newtheorem{proposition}{Proposition}
\newtheorem{lemma}{Lemma}
\newtheorem{theorem}{Theorem}
\newtheorem{corollary}{Corollary}
\def\PP{\mathbb{P}} 
\newtheorem{assumption}{}
\newcommand{\asec}{\texttt{ASE-CLUST}}
\newcommand{\dasec}{\texttt{DASE-CLUST}}
\title{A Doubled Adjacency Spectral Embedding Approach to Graph Clustering}
\author{
  Sinyoung Park \\
  Department of Mathematical Sciences
  \\
  University of Bath \\
  \texttt{sp2681@bath.ac.uk}
  \and
  Matthew A. Nunes \\
  Department of Mathematical Sciences \\
  University of Bath \\
  \texttt{man54@bath.ac.uk}
  \and
  Sandipan Roy \\
  Department of Mathematical Sciences \\
  University of Bath \\
  \texttt{sr2081@bath.ac.uk}
}
\begin{document}
\maketitle

\begin{abstract}
Spectral clustering is a popular tool in network data analysis, with applications in a variety of scientific application areas. However, many studies have shown that classical spectral clustering does not perform well on certain network structures, particularly core-periphery networks. To improve clustering performance in core-periphery structures, Adjacency Spectral Embedding (ASE) has been introduced, which performs clustering via a network's adjacency matrix instead of the graph Laplacian. Despite its advantages in this setting, the optimal performance of ASE is limited to dense networks, whilst network data observed in practice is often sparse in nature. To address this limitation, we propose a new approach which we term Doubled Adjacency Spectral Embedding (DASE), motivated by the observation that the squared adjacency matrix will leverage the fewer connections in sparse structures more efficiently in clustering applications. Theoretical results establish that the resulting clustering algorithm enjoys good consistency properties when determining sparse community structure. The performance and general applicability of the proposed method is evaluated using extensive simulations on both directed and undirected networks. Our results highlight the improved clustering performance on both sparse and dense networks in the presence of core-periphery structures. We illustrate our proposed technique on real-world employment and transportation datasets. 
\end{abstract}

\providecommand{\keywords}[1]{\noindent\textbf{Keywords:} #1}
\keywords{Adjacency Spectral Embedding; clustering; community detection; core-periphery networks.}

\section{Introduction}\label{sec:intro}

Network data are ubiquitous in numerous disciplines, including politics \parencite{adamic2005political}, neuroscience \parencite{rubinov2010complex}, biology \parencite{pavlopoulos2011using}, power systems \parencite{pagani2013power}, ecology \parencite{wey2008social}, and transportation \parencite{derudder2007flying}. These structures provide a natural framework for analysing, interpreting, and predicting complex relationships within these domains. For example, in social network analysis, individuals are typically represented as nodes, while their interactions are modelled as directed or undirected edges. Such graph-based representations facilitate identification of patterns and emergent behaviours, thereby enabling rigorous analysis and deeper interpretation of underlying dependence relationships. A critical aspect of network analysis lies in understanding their intrinsic structure. Networks can exhibit various structural patterns, including assortative, non-assortative, scale-free, and core–periphery configurations. 

One of the most fundamental tasks in analysing networks is to find clusters or communities that group similar objects/individuals together. The mathematical approaches for identifying such groups are commonly known as clustering or community detection techniques. Identifying communities within a network enables deeper understanding of local structures (dyads, triads etc.) as well as a holistic (global) understanding of the entire network. 
Community detection has been extensively studied in the statistical learning literature, see e.g. \cite{fjallstrom1998algorithms, newman2004finding, fortunato2010community, goldenberg2010survey, abbe2018community} and the references therein for overviews. Several approaches have been proposed to identify communities in networks, including spectral embedding \parencite{von2007tutorial, rohe2011spectral, rubin2022statistical}. Spectral clustering (SC) and its variants have developed over recent years as some of the most popular techniques for finding communities in network data, and their use is now commonplace in a number of settings, see e.g., \cite{shi2000normalized, ng2001spectral, von2007tutorial}. These methods predominantly perform spectral embedding using the Laplacian matrix -- a representation of the connections between individuals in network data.  Most of this work has been dedicated to undirected graphs, but there has been a growing body of literature on community detection in directed networks \parencite{satuluri2011symmetrizations, cucuringu2020hermitian, wang2020spectral}.  

Recent work has shown that Laplacian-based spectral embedding (LSE) tends to perform well in assortative or non-assortative networks, but not in other settings such as core-periphery structures \parencite{priebe2019two}; these networks naturally have two distinct regions: the `core' within which nodes are densely interconnected, and the `periphery' which exhibits sparse internal connections \parencite{borgatti2000models}.  Such structures appear in a number of application areas and data settings, such as international relations \parencite{chase2019core}, co-citation networks and social interaction networks \parencite{tang2019recent} as well as economics \parencite{in2020formation}, and have been studied more intensively in recent years, predominantly related to the identification of core structures (see e.g. \cite{csermely2013structure, rombach2014core, rombach2017core, zhang2015identification}).  To address the unsatisfactory performance of Laplacian-based spectral clustering in this setting, Adjacency Spectral Embedding (ASE) has been proposed \parencite{sussman2012consistent}. In particular, it has been shown that ASE provides a more meaningful interpretation of the underlying community structure than traditional spectral clustering for core-periphery networks, under a general stochastic block modelling framework \parencite{priebe2019two, tang2018limit}. However, ASE can struggle to detect structures when the underlying network is sparse.  For the specific case of {\em directed} networks, recent research has proposed statistical inference methods that can detect core-periphery structures in large networks \parencite{priebe2019two, zhang2015identification}, and directed core-periphery structures with L-shapes \parencite{elliott2020core}.

In this article, we develop an alternative approach to clustering nodes in a graph which is suitable for both undirected and directed core-periphery network structures, based on spectral embedding of the {\em squared} adjacency matrix, which we term \textit{Double Adjacency Spectral Embedding} (DASE). Our approach has similar motivation to the work of \cite{rohe2011spectral}, who show that spectral clustering based on the squared Laplacian achieves good consistency and misclustering properties under the stochastic block model in assortative / non-assortative settings. However, our work differs in that our focus is on core-periphery networks, for which adjacency-based embedding is more appropriate. Combined with a clustering algorithm such as $k$-means, our proposed clustering algorithm \dasec\ provides consistent clustering of nodes in a network under the general stochastic block model, and is particularly suitable for core-periphery structures, whether directed or undirected. Our proposed approach shows improved recovery of the true community structure compared to the ASE technique of \cite{sussman2012consistent}, particularly for challenging setups. One can show that DASE can be thought as a variant of an embedding under the weighted stochastic block model \parencite{gallagher2024spectral}, and that DASE provides more informative embeddings than ASE in certain settings as measured by the Chernoff information \parencite{chernoff1952measure}, commonly used to assess embeddings \parencite{tang2018limit, gallagher2024spectral}.

This article is structured as follows. In Section \ref{sec:background}, we introduce the stochastic block model (SBM), upon which we build our proposed methodology, and define core-periphery networks which we focus on in this work. Our proposed Doubled Adjacency Spectral Embedding (DASE) approach is described in Section \ref{sec:methodology}, whilst Section \ref{sec:theory} establishes theoretical properties related to DASE as well as the subsequent clustering algorithm, both in the general and core-periphery network cases. We demonstrate the efficacy of the DASE embedding and our proposed clustering procedure through extensive simulation studies in Section \ref{sec:sims}, considering both directed and undirected networks. Finally, in Section \ref{sec:data}, we consider the application of our proposed clustering methodology to two real-world datasets, namely faculty hiring data and flights observed on an airport network. In Section \ref{sec:concs}, we make some concluding remarks.


\section{Background}\label{sec:background}

In this section, we provide an overview of the stochastic block model \parencite{holland1983stochastic, Wang01031987}, and the random dot product graph embedding framework as developed by \cite{young2007random}, which we consider in this work. We also provide a summary of the adjacency spectral clustering algorithm of \cite{sussman2012consistent}, against which we benchmark our proposed techniques in Section \ref{sec:ase}. These aspects provide the necessary background to formulate our proposed adjacency-based embedding and clustering algorithm, described in Section \ref{sec:methodology}; we also recall the notion of core-periphery networks, for which structure our proposed techniques are particularly well-suited. \\ 


 For the remainder of this article, we denote a graph (or network) with $N$ nodes by $\mathcal{G}=(\mathcal{V},\mathcal{E})$, where $\mathcal{V}=\{1,2,\ldots,N\} = [N]$ is the vertex set and $\mathcal{E}=\{(i,j): i\in \mathcal{V}, j\in \mathcal{V}, i\neq j\}$ is its edge set, with the number of edges is given by the cardinality of $\mathcal{E}$, denoted $|\mathcal{E}|$. The edge set of $\mathcal{G}$ is commonly represented in the adjacency matrix $A \in \{0, 1\}^{N\times N}$, where the matrix entry $A_{ij}$ indicates whether or not there is an edge from node $i$ to node $j$. Note that for undirected graphs, $A$ is symmetric, i.e., $A_{ij}=A_{ji}$.  In this work, we consider graphs with no self-loops, i.e., $A_{ii} = 0$ for all $i \in [N]$.
 
\subsection{Stochastic Block Modelling Framework}\label{sec:sbm}

The stochastic block model (SBM) \parencite{holland1983stochastic} has been established as one of the most popular probabilistic generative models for networks. It has an inherent block structure, where block (community / cluster) membership is determined solely by a latent vector. The model is defined as follows.

\begin{definition}[\cite{holland1983stochastic}]
\label{def:sbm}
    For a graph $\mathcal{G}$ with nodes $i \in [N]$, let $\theta_i \in [K]$ denote the (unique) block membership label of node $i$, where $K$ is the number of blocks. $\{ \theta_i \}_{i=1}^N$ are independent with $\mathbb{P}(\theta_i = k) = \pi_k$. Given a block probability matrix $B \in [0,1]^{K \times K}$ and a parameter vector $\pi\in(0,1)^{K}$, the stochastic block model (SBM) generates the adjacency matrix $A$ conditional on $\theta$ as
    \begin{equation}\label{eq:sbm}
        A_{ij}|\theta_i,\theta_j \overset{\text{ind}}{\sim} \text{Bernoulli} \ (B_{\theta_i \theta_j}), \qquad i \ne j.
    \end{equation}

\end{definition}
\bigskip

In Definition \ref{def:sbm}, for undirected graphs, $A_{ij}$ is generated only for $i < j$ with the symmetric block matrix $B$; for directed graphs, all ordered pairs $(i,j)$ are generated independently. The vector $\pi \in (0,1)^K$ denotes the vector of group proportions, with the entries summing to $1$. For example, when $K=2$, the case $\pi = (0.5, 0.5)^\top$ represents a balanced network.


Note that we can write the likelihood of entries of the adjacency matrix under the model as
\begin{equation} \label{sbmprob}
\PP(A|\theta)  =\prod_{i\neq j}\PP(A_{ij}|\theta_i,\theta_j)
 = \prod_{i\neq j}(B_{\theta_i\theta_j})^{A_{ij}}(1-B_{\theta_i\theta_j})^{1-A_{ij}}.
\end{equation}

We define the node-wise probability matrix $Q \in [0,1]^{N \times N}$ as $Q =ZBZ^\top$ for both undirected and directed adjacency matrices, where $Z \in \{0,1\}^{N \times K}$ is the membership indicator matrix: $Z_{ij} = 1$ if the node $i$ belongs to the $j$-th group. Note that with the independence assumption in Definition \ref{def:sbm}, the entries of $A$ (the presence or absence of edges) are conditionally independent given $Q$.

\subsection{The Random Dot Product Graph Model}\label{sec:rdpg}

The stochastic block model can be seen as an example of the more general latent space network model of \cite{hoff2002latent}, see e.g. \cite{rohe2011spectral}.  A particularly popular variant of a latent space model is the random dot product graph (RDPG) model introduced by \cite{young2007random}. The RDPG model defines the propensity of a connection between two nodes in a network via the dot product of their 
latent vectors; the SBM can also be alternatively characterised using the RDPG model, facilitating a probabilistic interpretation of the graph adjacency via particular latent space dot product embeddings.  


\begin{definition}[\cite{young2007random}]
\label{def:rdpg}
    Let $X, Y \in \mathbb{R}^{N \times d}$, where $X_i \in \mathbb{R}^{d}$ and $Y_i \in \mathbb{R}^{d}$ represent latent position vectors for nodes $i \in [N]$, with latent space dimension $d$. Assume that the latent position matrices $X$ and $Y$ are random and satisfy
    \[
    \mathbb{P} \left( \langle X_i, Y_j \rangle \in [0,1] \right) = 1, \qquad \text{for all } i,j \in [N].
    \]
    Then, under the Random Dot Product Graph (RDPG) model, conditioned on $X_i$  and $Y_i$ the entries of the adjacency matrix $A_{ij}$ are independent Bernoulli random variables with parameter $\langle X_i, Y_j \rangle$ for all $i \ne j$, so that
    \begin{equation}
        \mathbb{P} \left( A|X, Y \right) = \prod_{i \ne j} \mathbb{P} \left( A_{ij} | X_i, Y_j \right).
    \end{equation}
    For the adjacency matrix $A$ of undirected graphs, we have 
    \begin{equation}
        \mathbb{P} \left( A|X, Y \right) = \prod_{i < j} \mathbb{P} \left( A_{ij} | X_i, X_j \right)
    \end{equation}
\end{definition}
\bigskip

Based on Definition \ref{def:rdpg}, the stochastic block model can be viewed as a special case of the RDPG model \parencite{athreya2018statistical, sussman2012consistent}. In particular, for directed graphs,
\begin{equation}
    \mathbb{P}(A_{ij} = 1) = B_{\theta_i \theta_j} = \langle X_i, Y_j \rangle,
\end{equation}
where the latent dimension $d$ equals the rank of the block probability matrix $B$. 

\subsection{Measuring Performance of Clustering}\label{sec:measuringperformance}

As discussed in detail in \cite{sussman2012consistent}, for a graph distributed according to the model defined in \ref{def:sbm}, 
we can use the following mean square error criterion for 
clustering the rows of $\hat{X}$ into $K$ blocks:
\begin{equation}
\label{eq:clust}
(\hat{\nu}, \hat{\theta}) 
= \arg\min_{\nu,\, \theta} 
\sum_{u=1}^{N} \left\| \hat{X}_{u} - \nu_{\theta(u)} \right\|_2^2, 
\end{equation}
where $\hat{\nu}_i \in \mathbb{R}^m$ gives the centroid of block $i$,
$\theta : [N] \to [K]$ is the block assignment function and $\hat{X} \in \mathbb{R}^{N \times m}$ is the latent embedding matrix with embedding dimension $m \in \mathbb{N}$. Note that this criterion corresponds to that used by $k$-means clustering as used below in Algorithms \ref{alg:ASE} and \ref{alg:DASE}. 

\subsection{Graph Clustering with Adjacency Spectral Embedding}\label{sec:ase}

Traditional spectral clustering is achieved by taking the singular value decomposition (SVD) of the Laplacian matrix, $L$, and then performing a clustering algorithm such as $k$-means \parencite{macqueen1967multivariate, lloyd1982least} using the singular vectors corresponding to the largest $d$ singular values of $L$, where $d$ is some chosen latent dimension. As an alternative embedding and clustering algorithm, \cite{sussman2012consistent, priebe2019two} propose taking the SVD of the adjacency matrix, $A$, instead of $L$, termed {\em Adjacency Spectral Embedding} (ASE). The clustering algorithm based on ASE is summarised in Algorithm \ref{alg:ASE}; in this article, we refer this clustering procedure as \asec\ to distinguish it from the embedding itself.

\begin{algorithm}[H]
\caption{Adjacency Spectral Clustering (\texttt{ASE-CLUST})}
\label{alg:ASE}
\hspace*{\algorithmicindent} 
\textbf{Input}: Adjacency matrix $A \in \{ 0, 1 \}^{N \times N}$, number of clusters $K \in \mathbb{N}$ and latent dimension $d \in \mathbb{N}$\\
 \hspace*{\algorithmicindent} 
 \textbf{Output}: Clustering labels $l \in \mathbb{R}^N$
\begin{algorithmic}[1]
\State Compute the singular value decomposition (SVD) of $A$, i.e., $A = U \Sigma V^T$, where the singular values in $\Sigma$ are ordered in decreasing magnitude.
\If{$A$ is directed}
    \State Define the embedding $S = \left[ U_d\Sigma_d^{1/2}| V_d\Sigma_d^{1/2} \right] \in \mathbb{R}^{N \times 2d}$, where $U_d$ and $V_d$ contain the top $d$ left and right singular vectors, respectively. $\Sigma_d$ contains the corresponding singular values.
\Else
    \State Define the embedding $S = U_d \Sigma_d^{1/2} \in \mathbb{R}^{N \times d}$, where $U_d$ contains the top $d$ eigenvectors corresponding to the largest $d$ singular values in $\Sigma_d$.
\EndIf
\State Extract the clustering labels $l$ from an algorithm such as $k$-means (with $K$ clusters) applied to the rows of $S$, using the criterion \eqref{eq:clust}. 
\end{algorithmic}
\end{algorithm}
\bigskip

Note that other unsupervised clustering algorithms can be substituted for $k$-means in Step 6 of Algorithm \ref{alg:ASE}. In particular, Gaussian mixture models (GMMs) \parencite{mclachlan1988mixture, dempster1977maximum} are a popular alternative (see e.g. \cite{priebe2019two}); results in \cite{tang2018limit} have shown that asymptotically, spectral clustering methods such as LSE and ASE can be expressed as GMMs.  The GMM formulation of spectral clustering methods can be used to compute the Chernoff information for assessing the quality of embeddings, see e.g. \cite{tang2018limit, priebe2019two, gallagher2024spectral}.  We explore this aspect of our proposed clustering methodology in Section \ref{sec:chernoff}. 

\subsection{Core-periphery Graphs}\label{sec:cpgraphs}

In this paper, we focus on core-periphery structures, for which adjacency-based embeddings have been shown to be successful. Several intuitive interpretations of core-periphery graphs have been summarised in \cite{borgatti2000models, rombach2014core}. A common characterisation of such networks in previous research is that the connectivity structure separates into two distinct groups, as follows.

\begin{definition}[Core-periphery graphs]
\label{def:core-periphery}
    Core-periphery graphs have connectivity structures consisting of two parts: (i) the `core', within which connections are dense, and (ii) the `periphery', within which, relative to the core structure, connections are sparse. In addition, nodes in the core group have relatively dense connections with nodes in the periphery.
\end{definition}
\bigskip

Definition \ref{def:core-periphery} has a natural interpretation for a two-cluster ($K=2$) block structure. Recent research has broadened the concept of core-periphery graphs for $K>2$. For instance, \cite{rombach2014core} introduced models with multiple core groups, whilst \cite{gallagher2021clarified} proposed a hub-and-spoke model that represents a layered structure, including an intermediate layer between core and periphery groups. These developments highlight a fundamental challenge in defining core-periphery graphs: for $K>2$, existing approaches either generalise the block structure or adopt a layered structure. For this reason, developing a single, consistent framework that encompasses both structural definitions for $K>2$ is challenging as these definitions are contrasting. We therefore restrict our focus in this paper to core-periphery graphs with $K=2$ as defined in Definition \ref{def:core-periphery}, leaving extensions to $K>2$ as future work.

For our setting, we consider an adjacency matrix $A$ generated from an SBM with two communities where the probability matrix $B$ is a $2 \times 2$ matrix. The within-core and within-periphery connection probabilities are $p$ and $s$, respectively, while the between group connection probabilities are $q$ and $r$. We assume the ordering $s<r,q<p$, which implies that Cluster $1$ has the highest within-cluster connectivity and thus represents the core group. Figure \ref{Fig:heatmap_core-periphery} illustrates this block probability structure.


\begin{figure}
    \centering
    \includegraphics[width=0.4\linewidth]{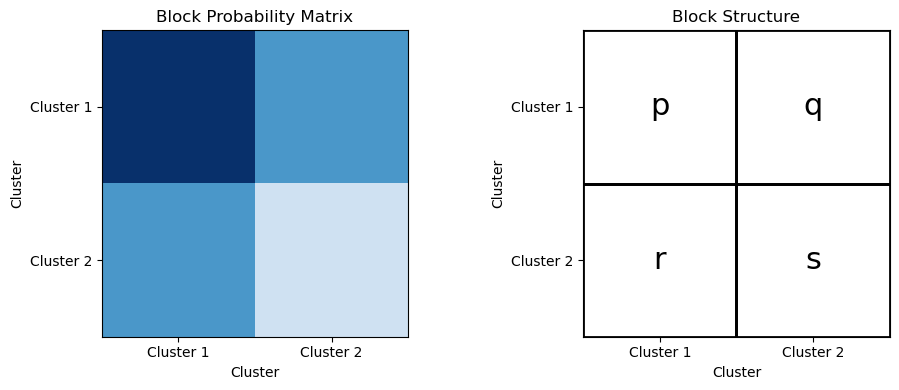}
    \caption{Block connection probability structure for core-periphery graphs.}
    \label{Fig:heatmap_core-periphery}
\end{figure}

\section{Graph Clustering with Doubled Adjacency Spectral Embedding}
\label{sec:methodology}

In this section, we introduce our new spectral embedding methodology and clustering technique based on a latent space representation of the squared adjacency matrix of a graph. More specifically, we consider the properties of the embedding in Section \ref{sec:dase}; the subsequent spectral clustering procedure, \dasec, is described in Section \ref{sec:daseclust}. 

\subsection{Doubled Adjacency Spectral Embedding (DASE)} \label{sec:dase}
In the following, we introduce our square-transformed adjacency matrix which we will use to perform our clustering to detect core and periphery communities.  We begin by discussing the doubled adjacency matrix.

\begin{definition}[Doubled adjacency matrix, $\tilde{A}$]
    Given an adjacency matrix $A \in \{0, 1\}^{N \times N}$, the doubled adjacency matrix $\tilde{A} \in [0,N]^{N \times N}$ is defined as
    \begin{equation}\label{eq:doubleA}
        \tilde{A} := AA,\ \textrm{with} \ [AA]_{ij} = \sum_{k=1}^N A_{ik} A_{kj}.
    \end{equation}
\end{definition}
\bigskip

The doubled adjacency matrix $\tilde{A}$, which is simply the square of the adjacency matrix, represents the number of two-step walks that one can take from one node to another for both directed and undirected graphs. For example, since $A$ indicates the connections between nodes, $(\tilde{A})_{12}$ indicates that the number of ways to travel from node $1$ to node $2$ in two steps.

In contrast to the adjacency matrix $A$, whose entries are Bernoulli distributed under the SBM (Equation \eqref{eq:sbm}), each entry of $\tilde{A}$ is the sum of products of Bernoulli random variables, and (under the assumption of independence such as in Definition \ref{def:sbm}) follows a Poisson-Binomial (PB) distribution \parencite{tang2023poisson}.   More specifically, if $A$ is the adjacency matrix of a directed graph, $A_{is}A_{sj}$ and $A_{it}A_{tj}$, are conditionally independent for $s \ne t$ given any $i$ and $j$, since the entries of $A$ are conditionally independent Bernoulli random variables with parameters $Q_{ij} = \mathbb{E}(A_{ij}) = (XY^\top)_{ij}$, where $X$ and $Y$ are the matrices of latent vectors as in the random dot product graph model formulation of the SBM in Section \ref{sec:rdpg}. Therefore,
\[
\tilde{A}_{ij} \sim \text{PB}(Q_{i1}Q_{1j}, \dots, Q_{iN}Q_{Nj}).
\]
If $A$ is the adjacency matrix of an undirected graph, then for the diagonal entries $(i=j)$, we have
\[
\mathbb{E}(\tilde{A}_{ii}) = \tilde{Q}_{ii} = \sum_{k=1}^N Q_{ik},
\]
and for off-diagonal entries $(i \ne j)$, 
\[
\mathbb{E}(\tilde{A}_{ij}) = \tilde{Q}_{ij} = \sum_{k=1}^N Q_{ik} Q_{kj}.
\]
Thus, for undirected graphs
\begin{align*}
    \tilde{A}_{ij} \sim
    \begin{cases}
        PB(Q_{i1}Q_{1j}, \dots, Q_{iN}Q_{Nj}) \quad & \text{if } i \ne j, \\
        PB(Q_{i1}, \dots, Q_{iN}) \quad & \text{if } i = j.
    \end{cases}
\end{align*}

Beyond the interpretation of $\tilde{A}$ as counting two-step walks on the graph, the doubled adjacency matrix can also be viewed as an instance of the weighted Generalized Random Dot Product Graph (WGRDPG) Model \parencite{gallagher2024spectral}, in which conditional on the block membership labels $\theta_i$, $\theta_j$, a general real-valued matrix $G$ (not necessarily the adjacency) is allowed to have any real-valued distribution, i.e.,  
    \begin{equation}\label{eq:wrdpg}
        G_{ij}|\theta_i,\theta_j \overset{\text{ind}}{\sim} \mathcal{H} (\theta_i,\theta_j), 
    \end{equation}
in place of the Bernoulli distribution in \eqref{eq:sbm} and the subsequent discussion in Section \ref{sec:rdpg}.  In our case, the matrix $G$ in \eqref{eq:wrdpg} is taken to be $\tilde{A}$, which reflects the strength of connections between nodes, and the weights follow a Poisson-Binomial distribution. 

Similar to Adjacency Spectral Embedding (ASE) which uses the latent space representation of the adjacency matrix $A$ under the RDPG model, we refer to the latent space representation of $\tilde{A}$ under the (weighted) random dot product graph model as {\em Doubled} Adjacency Spectral Embedding (DASE); similar to Section \ref{sec:ase} for ASE, an example embedding can be obtained using the SVD of $\tilde{A}$, which can then be used to formulate a graph clustering procedure, as described in the next section.  

\subsection{Proposed Graph Clustering Algorithm (\dasec)}\label{sec:daseclust}

The normalised doubled adjacency matrix ($AA/N$) and the unnormalised doubled adjacency matrix ($AA$) yield equivalent spectral embeddings, as normalisation simply rescales the singular values. Consequently, without loss of generality, we use the unnormalised version, which corresponds to a weighted SBM with Poisson-Binomial edge weights.

Our proposed clustering algorithm is based on forming a spectral embedding of the unnormalised doubled adjacency matrix. Whilst similar in nature to the clustering using ASE (Algorithm \ref{alg:ASE}), this subtle change in embedding matrix has significant consequences. More specifically, we show in Section \ref{sec:theory} that the spectral embedding of our unnormalised doubled adjacency matrix provides a tighter concentration bound for misclustering rates than that of the unnormalised adjacency matrix, in terms of the Frobenius norm metric. We then provide a consistent node cluster assignment via $k$-means or a Gaussian mixture model applied to our embedding. The steps of the proposed algorithm (\dasec) are summarised below in Algorithm \ref{alg:DASE}.

\begin{algorithm}[H]
\caption{Doubled Adjacency Spectral Embedding (DASE)-CLUST}
\label{alg:DASE}
\hspace*{\algorithmicindent} 
\textbf{Input}: Adjacency matrix $A \in \{ 0, 1 \}^{N \times N}$, number of clusters $K \in \mathbb{N}$ and latent dimension $d \in \mathbb{N}$\\
 \hspace*{\algorithmicindent} 
 \textbf{Output}: Clustering labels $l \in \mathbb{R}^N$
\begin{algorithmic}[1]
\State Compute the doubled adjacency matrix $\tilde{A}$
\State Compute the singular value decomposition (SVD) of $\tilde{A}$, i.e., $\tilde{A} = \tilde{U} \tilde{\Sigma} \tilde{V}^T$, where the singular values in $\tilde{\Sigma}$ are ordered in decreasing magnitude.
\If{$A$ is directed}
    \State Define the embedding $S = \left[ \tilde{U}_d\tilde{\Sigma}_d^{1/2}| \tilde{V}_d\tilde{\Sigma}_d^{1/2} \right] \in \mathbb{R}^{N \times 2d}$, where $\tilde{U}_d$ and $\tilde{V}_d$ contain the top $d$ left and right singular vectors, respectively. $\tilde{\Sigma}_d$ contains the corresponding singular values.
\Else
    \State Define the embedding $S = \tilde{U}_d \tilde{\Sigma}_d^{1/2} \in \mathbb{R}^{N \times d}$, where $\tilde{U}_d$ contains the top $d$ eigenvectors corresponding to the largest $d$ singular values in $\tilde{\Sigma}_d$.
\EndIf
\State Extract the clustering labels $l$ from either $k$-means or a Gaussian mixture model (with $K$ clusters) applied to the rows of $S$. 
\end{algorithmic}
\end{algorithm}

\subsection{Theoretical Properties of the Proposed \dasec\ Algorithm}
\label{sec:theory}

We now provide theoretical justification for our proposed \dasec\ approach, focusing on its consistency properties. The proofs of the results in this section can be found in Appendix \ref{appendix:proofs}. 

\subsubsection{Notation and Assumptions}
\label{sec:notations}
We use the following notation for the remainder of this article. Let $d = \text{rank}(B)$, where $B$ is the block probability matrix defined in Definition \ref{def:sbm}. From Section \ref{sec:sbm}, recall that $Q = \mathbb{E}(A) =  ZBZ^\top$, where the block membership indicator matrix $Z \in \{0,1\}^{N \times K}$. Let $Z^\top Z =: D_n = \textbf{diag}(n_1, \dots, n_K) \in \mathbb{R}^{K \times K}$, where $n_i=\#\{u : \theta(u) = i\}$ is the size of group $i$ counted via the block assignment function $\theta:[N]\rightarrow [K]$, and $\textbf{diag}(n_1, \dots, n_K)$ denotes the diagonal matrix such that $(D_n)_{ii}=n_i$ for $i=1,\dots,K$.

We denote $\tilde{B} := B D_n B$. Let the singular value decomposition of $\tilde{B}$ be
    \[
    \tilde{B} = \tilde{L} \tilde{\Lambda} \tilde{R}^\top,
    \]
where $\tilde{L}, \tilde{R} \in \mathbb{R}^{K \times d}$ and $\tilde{\Lambda} \in \mathbb{R}^{d \times d}$ is diagonal. We assume that $\lambda_d(\tilde{B}) = \tilde{b} N$ for some constant $\tilde{b} > 0$. Suppose $\tilde{\pi} = \min_{i \in \{1, \dots, K\} } \pi_i$, where $\pi_i = n_i / N \in (0,1)$ denotes the proportion of nodes in the group $i$. Here, $\tilde{b}$ and $\tilde{\pi}$ are not dependent on $N$.\\

The recovery of cluster memberships for each node is dependent on the quality of the embedding $S$ in Algorithm \ref{alg:DASE}. If the embedding encodes the distribution of the nodes in different clusters distinctly, it renders the clustering algorithm to be consistent in finding the communities. 
%
%
%
The main result of this section establishes the consistency of \dasec, in terms of the estimation of the community membership for each node based on the block assignment function, which assigns blocks based on $S$. \\

To establish this result, we make the following assumptions, in a similar vein to \cite{sussman2012consistent}.

\begin{assumption}
\label{assum:1}
Assume a sequence of random adjacency matrices \(A^{(N)}\) with node set \([N]\) for $N\in\{1,2,\ldots,\}$. The edges are distributed according to a stochastic block model (see Definition \ref{def:sbm}) with parameters \(B\) and \(\pi\), with \(B \in [0,1]^{K \times K}\) and \(\pi \in (0,1)^{K}\) be a vector with positive entries summing to unity; let $Q^{(N)} = \mathbb{E}(A^{(N)})$ be a sequence of matrices representing the probability of connection between nodes for a fixed node set $[N]$.
\end{assumption}

\begin{assumption}
\label{assum:2}
Assume \(\mathrm{rank}(B) = d\).
\end{assumption}

\begin{assumption}
\label{assum:3}
   Suppose $\tilde{B} = \tilde{\mathbf{\nu}} \tilde{\mathbf{\mu}}^\top$, where $\tilde{\mathbf{\nu}}, \tilde{\mathbf{\mu}} \in \mathbb{R}^{K \times d}$. Define $\tilde{\beta} > 0$ such that 
    \[
    \tilde{\beta} < \lVert \tilde{\nu}_u - \tilde{\nu}_v \rVert \qquad \text{or} \qquad \tilde{\beta} < \lVert \tilde{\mu}_u - \tilde{\mu}_v \rVert \quad \text{for all} \quad u \ne v\in[N].
    \]
    Moreover, we write $\tilde{\beta} = \hat{\beta}N$ for some constant $\hat{\beta} > 0$. 
\end{assumption}

We denote by $\theta^{(N)} : [N] \rightarrow [K]$ a random block membership function corresponding to the adjacency $A^{(N)}$.

\subsubsection{Consistency of Clustering Using \dasec}\label{sec:consist-gen}

In this section, we present consistency results for both general networks generated under the SBM, and the special case of core-periphery networks. Our approach to establish consistency of misclustering using DASE follows a similar strategy to \cite{sussman2012consistent}, in which we first consider the properties of $\tilde{A}\tilde{A}^\top$, from which results from spectral theory allow us to quantify the behaviour of the singular values of $\tilde{A}$ and therefore the embedding. This is then used to quantify the misclustering rate using DASE within a clustering algorithm.  
%

For the results in this section, we initially focus on the directed network setting; we provide a treatment of undirected networks as a corollary to the main  consistency result.




\begin{theorem}
\label{thm_dir}
    Assume that the graph under consideration is directed, and that Assumptions \ref{assum:1}, \ref{assum:2} and \ref{assum:3} in Section \ref{sec:notations} hold. Suppose also that the number of blocks $K$ and the latent vector dimension $d$ are known. Let $\hat{\theta}^{(N)}: \mathcal{V} \mapsto \{ 1, \dots, K \}$ be the estimated block assignment obtained by clustering the rows of $\tilde{W}^{(N)}=[\tilde{U}^{(N)}|\tilde{V}^{(N)}]$, satisfying Equation~(\ref{eq:clust}) for each $N$. Let $H_K$ denote the set of permutations of $[K]$. Then, with probability at least $1-4N^{-2}$, 
    \begin{equation*}
        \min_{\rho \in H_K} | \{ u \in \mathcal{V} : \theta^{(N)}(u) \ne \rho(\hat{\theta}^{(N)}(u)) \} | \leq \frac{2^4 \, 3^2}{\hat{\beta}^2 (\tilde{b} \tilde{\pi})^5~} \cdot \frac{(N-1)^4}{N^4} \cdot \frac{\log N}{N^2}.
    \end{equation*}
\end{theorem}
\bigskip
The proof of Theorem \ref{thm_dir} is given in Appendix \ref{sec:proofthm1}.\\


For undirected graphs, we can exploit the equality of left and right singular values to derive the misclustering rate is as follows.

\begin{corollary}\label{corrl_undirected}
    Under the undirected condition in Section \ref{sec:notations} and Assumptions \ref{assum:1}, \ref{assum:2} and \ref{assum:3}, suppose that the number of blocks $K$ and the latent vector dimension $d$ are known. Let $\hat{\theta}^{(N)}: \mathcal{V} \mapsto \{ 1, \dots, K \}$ be the estimated block assignment obtained by clustering the rows of $\tilde{W}^{(N)}$ where $\tilde{W}^{(N)}=\tilde{U}^{(N)} (\tilde{\Sigma}^{(N)})^{1/2}$, satisfying Equation~(\ref{eq:clust}) for each $N$. Let $H_K$ denote the set of permutations of $[K]$. With probability at least $1-4N^{-2}$, 
    \begin{equation*}
        \min_{\rho \in H_K} | \{ u \in \mathcal{V} : \theta^{(N)}(u) \ne \rho(\hat{\theta}^{(N)}(u)) \} | \leq \frac{2^3 \, 3^2}{\hat{\beta}^2 (\tilde{b} \tilde{\pi})^5~} \cdot \frac{(N-1)^4}{N^4} \cdot \frac{\log N}{N^2}.
    \end{equation*}
\end{corollary}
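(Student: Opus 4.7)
The plan is to mirror the proof of Theorem~\ref{thm_dir}, exploiting the symmetry of $A$ to simplify several steps and to account for the reduction of the leading constant from $2^5$ to $2^4$. Because $A$ is symmetric, $\tilde{A} = A^2$ is also symmetric, so its SVD is an eigendecomposition $\tilde{A} = \tilde{U}\tilde{\Sigma}\tilde{U}^\top$, and $\tilde{A}\tilde{A}^\top = \tilde{A}^\top \tilde{A} = \tilde{A}^2$. Only one concentration bound and one subspace perturbation estimate are therefore required (rather than the pair used in the directed proof), and the embedding $\tilde{W} = \tilde{U}_d \tilde{\Sigma}_d^{1/2}$ has dimension $d$ rather than the concatenated $2d$.

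First I would establish the undirected analogue of Proposition~\ref{proposition1}: with probability at least $1 - 2N^{-2}$,
\begin{equation*}
\|\tilde{A}\tilde{A}^\top - \mathbb{E}(\tilde{A}\tilde{A}^\top)\|_F \leq \sqrt{2 N^7 \log N}.
\end{equation*}
Entrywise, for fixed $u, v$ one conditions on rows $u$ and $v$ of $A$ (which by symmetry also fixes columns $u$ and $v$) and applies Hoeffding's inequality to the bounded sum $\sum_w \tilde{A}_{uw}\tilde{A}_{vw}$, then takes a union bound over the $N^2$ entries. Next, using $Q^2 = Z\tilde{B}Z^\top$ together with Assumption~\ref{assum:2} and $\lambda_d(\tilde{B}) = \tilde{b} N$, the population spectral gap of $\mathbb{E}(\tilde{A}\tilde{A}^\top)$ is of order $(\tilde{b}\tilde{\pi})^2 N^4$. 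A Davis--Kahan (or Procrustes-type) argument then converts the concentration bound into a Frobenius bound
\begin{equation*}
\|\tilde{W} - W^{*} R\|_F^2 \leq \frac{C \, N \log N}{(\tilde{b}\tilde{\pi})^3}
\end{equation*}
for some orthogonal rotation $R$ and absolute constant $C$, where $W^{*} = U^{*}_d (\Sigma^{*}_d)^{1/2}$ is the population embedding. By Assumption~\ref{assum:3} and the identity $Q^2 = Z\tilde{B}Z^\top$, the rows of $W^{*}$ take only $K$ distinct values with pairwise separation at least $\tilde{\beta} = \hat{\beta}N$.

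The clustering step is handled by the standard counting argument as in \cite{sussman2012consistent}: any node misclassified by the $k$-means assignment $\hat{\tau}$ must have its embedded row at distance at least $\tilde{\beta}/2$ from the correct centroid (after applying the best permutation $\rho \in S_K$), so each such node contributes at least $\tilde{\beta}^2/4 = \hat{\beta}^2 N^2 / 4$ to $\|\tilde{W} - W^{*} R\|_F^2$. Dividing the Frobenius bound by this quantity and tracking constants through the Davis--Kahan step yields the claimed rate $2^4 \cdot 3^2 / (\hat{\beta}^2 (\tilde{b}\tilde{\pi})^5) \cdot \log N / N$. The factor $2^4$ (rather than $2^5$ as in Theorem~\ref{thm_dir}) arises precisely because the undirected embedding has $d$ coordinates instead of the concatenated $2d$, so the Frobenius error budget is paid once instead of being duplicated across $[\tilde{U}|\tilde{V}]$.

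The main obstacle will be re-establishing the conditional independence used in Proposition~\ref{proposition1} in the symmetric setting. In the directed case, once rows $u$ and $v$ of $A$ are fixed, the products $\{\tilde{A}_{uw}\tilde{A}_{vw}\}_w$ depend on pairwise disjoint sets of entries of $A$ and are conditionally independent. In the undirected case, the single entry $A_{w_1 w_2}$ appears (potentially) in both $\tilde{A}_{uw_1}\tilde{A}_{vw_1}$ and $\tilde{A}_{uw_2}\tilde{A}_{vw_2}$ via the identity $A_{w_2 w_1} = A_{w_1 w_2}$, so strict conditional independence fails. However, only $O(1)$ shared entries occur per pair $(w_1, w_2)$, which contributes only a lower-order perturbation to the variance of $\sum_w \tilde{A}_{uw}\tilde{A}_{vw}$. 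The same $\sqrt{N^7 \log N}$ bound can be recovered either by an Azuma/martingale argument on the filtration revealing edges one column at a time, or by further conditioning on the $2 \times (N-2)$ submatrix of shared edges; in either case the additional randomness is absorbed into the universal constant and the stated bound survives.
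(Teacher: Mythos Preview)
Your approach is essentially the same as the paper's: exploit the symmetry of $A$ (hence of $\tilde{A}$) so that only one Davis--Kahan bound is needed rather than the two used for the directed concatenation $[\tilde{U}\,|\,\tilde{V}]$, which saves exactly the factor of $2$. The paper's own proof is in fact a two-sentence reduction---it observes that $U=V$ in the undirected SVD, invokes Lemma~\ref{lemma3}, and concludes the bound improves by a factor of $2$---so your re-derivation is more thorough than what appears in the text, and the conditional-independence subtlety you flag is a genuine issue the paper simply does not address.
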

\bigskip
The proof of this result is given in Appendix \ref{sec:proofcorrl_undirected}, where it is noted that the difference in the bounds between the directed and undirected network cases is just a factor of $2$ resulting from the equality of left and right singular vectors in the embedding of the doubled adjacency matrix.

\subsubsection{Consistency Results for Core-periphery Graphs}\label{sec:cpclust}
We now consider the rate of consistency in the specific case of core-periphery graphs, whose structure has been shown to represented well with adjacency-based spectral embedding.  To our knowledge, this is the first work which provides a theoretical result of rate of misclassification specific to core-periphery networks, albeit with $K=2$. 

To derive the misclustering rate on core-periphery graphs, we modify Assumption \ref{assum:2} in the following way.

\begin{assumption}
\label{assum:4}
Assume that \(\mathrm{rank}(B) = 2\) and that $B$ represents connection probability matrix in a core-periphery network as defined in Definition \ref{def:core-periphery}. Let the within-core and within-periphery connection probabilities be $p$ and $s$, respectively, and denote the between group connection probabilities by $q$ and $r$. Assume the ordering $s<r,q<p$.
\end{assumption}

\bigskip

Note that under Assumption \ref{assum:4}, the matrix $Q$ can be written
\begin{equation}
\label{eq:Q}
    Q = 
    \begin{bmatrix}
    p J_{n_1 \times n_1} & q J_{n_1 \times n_2} \\
    r J_{n_2 \times n_1} & s J_{n_2 \times n_2}
    \end{bmatrix},
\end{equation}
where $J_{n,m}$ is a matrix $n \times m$ whose entries are all equal to $1$.\\

With these conditions, we establish the following misclustering rate of \dasec\ when the graph is directed and has a core-periphery structure.

\begin{theorem}
\label{thm_dir_core}
    Under the directed condition in Section \ref{sec:notations} and Assumptions \ref{assum:1}, \ref{assum:3} and \ref{assum:4}, suppose that the number of blocks $K$ and the latent vector dimension $d$ are known. Let $\hat{\theta}^{(N)}: \mathcal{V} \mapsto \{ 1, \dots, K \}$ be the block assignment obtained by clustering the rows of $\tilde{W}^{(N)}$, where $\tilde{W}^{(N)}=[\tilde{U}^{(N)}|\tilde{V}^{(N)}]$, satisfying Equation~(\ref{eq:clust}) for each $N$. Let $H_K$ denote the set of permutations of $[K]$. Then, it almost always holds that
    \begin{equation*}
        \min_{\rho \in H_K} | \{ u \in \mathcal{V} : \theta^{(N)}(u) \ne \rho(\hat{\theta}^{(N)}(u)) \} | \leq \frac{2^4 \, 3^2 \, (p^4 - s^4)^2}{\hat{\beta}^2 (\tilde{b}\tilde{\pi})^5} \cdot \frac{(N-1)^4}{N^4} \cdot \frac{\log N}{N^2}.
    \end{equation*}
\end{theorem}
\bigskip
We refer the reader to Appendix \ref{sec:proofthmcore} for the proof of Theorem \ref{thm_dir_core}. Note that the only difference between the bound achieved for the core-periphery case in Theorem \ref{thm_dir_core} and that of the general network case in Theorem \ref{thm_dir} is a factor of $(p^4-s^4)^2$.  This highlights that the misclustering rate is dependent on the core and periphery cluster connection probabilities, with better (theoretical) clustering achieved if there is a strongly distinct core compared to that in the periphery.\\

For the purpose of comparing clustering consistency between DASE and ASE, we also derive the misclustering rate for ASE in the core–periphery graph case based on the proof for general graphs in \cite{sussman2012consistent}. To this end, similar to Section \ref{sec:notations} let the singular value decomposition of $B$ be denoted
\[
B = L \Lambda R^T,
\]
where $L, R \in \mathbb{R}^{K \times d}$ and $\Lambda \in \mathbb{R}^{d \times d}$ is diagonal, and again we assume $\lambda_d(B) = b$ for some constant $b>0$, with $b$ not dependent on $N$.  The equivalent of Assumption \ref{assum:3} for ASE is\\

\begin{assumption}
\label{assum:5}
   Suppose $B = \mathbf{\nu} \mathbf{\mu}^\top$, where $\mathbf{\nu}, \mathbf{\mu} \in \mathbb{R}^{K \times d}$. Define $\beta > 0$ such that 
    \[
   \beta < \lVert \nu_u - \nu_v \rVert \qquad \text{or} \qquad \beta < \lVert \mu_u - \mu_v \rVert \quad \text{for all} \quad u \ne v\in[N].
    \]
\end{assumption}

The consistency bound for \asec\ for core–periphery graphs is as follows.

\begin{corollary}
    \label{corrl_core_ase}
    Under the directed condition in Section \ref{sec:notations} and Assumptions \ref{assum:1}, \ref{assum:4} and \ref{assum:5}, suppose that the number of blocks $K$ and the latent vector dimension $d$ are known. Let $\hat{\theta}^{(N)}: \mathcal{V} \mapsto \{ 1, \dots, K \}$ be the block assignment obtained by clustering the rows of $W$, where $W^{(N)}=[U^{(N)}|V^{(N)}]$, and $U^{(N)}$ and $V^{(N)}$ as defined in Algorithm \ref{alg:ASE} (for fixed $N$), satisfying Equation~(\ref{eq:clust}) for each $N$. Let $H_K$ denote the set of permutations of $[K]$. Then, it almost always holds that
    \begin{equation*}
        \min_{\rho \in H_K} | \{ u \in \mathcal{V} : \theta^{(N)}(u) \ne \rho(\hat{\theta}^{(N)}(u)) \} | \leq \frac{2^4 \, 3^3 \, (1-s^2+p^2)^2}{\beta^2 (b\tilde{\pi})^5} \, \log N.
    \end{equation*}
\end{corollary}
\bigskip
The proof of Corollary \ref{corrl_core_ase} is provided in Appendix \ref{appendix:core_ase}.\\

From Theorem \ref{thm_dir}, it can be seen that \dasec\ achieves a tighter bound than the bound of $\frac{2^4 \, 3^3}{\beta^2 (b\tilde{\pi})^5} \, \log N$ for \asec\ reported in \cite{sussman2012consistent}. This follows through to the core-periphery case based on Theorem \ref{thm_dir_core} and Corollary \ref{corrl_core_ase}, where it can also be seen that as $N$ increases, the misclustering rate of \dasec\ converges to zero faster than that of \asec, indicating that graph clustering using our proposed doubled adjacency-based embedding achieves a stronger consistency guarantee in core-periphery networks.  

Note that consistency results for {\em undirected} core-periphery graphs can be obtained similar to the general case in Section \ref{sec:consist-gen}, and the bounds in Theorem \ref{thm_dir_core} and Corollary \ref{corrl_core_ase} would change by a multiplicative factor of $2$ as in the general network case (Corollary \ref{corrl_undirected}).

\section{Simulation Study}
\label{sec:sims}
In this section, we empirically evaluate our proposed DASE method, and compare it to spectral clustering \parencite{ng2001spectral} and Adjacency Spectral Embedding (ASE) \parencite{sussman2012consistent} using simulated data. To ensure generality, we consider both directed and undirected graphs. Similar to most existing studies \parencite{priebe2019two, sussman2012consistent}, we consider networks with only two clusters.

For each setting, we examine three simulation scenarios: (i) fixing the network size, $N$, while varying the network density, $\alpha$ (see below) (ii) fixing the network density while varying the network size, and (iii) fixing both the network size and network density while varying the group proportions, $\pi$. We define the network (edge) density as
\[ \alpha = 
\begin{cases}
\dfrac{\lvert \mathcal{E} \rvert}{N(N-1)}, & \text{if the network is directed}, \\[8pt] \dfrac{2\lvert \mathcal{E} \rvert}{N(N-1)}, & \text{if the network is undirected}, 
\end{cases}
\]
where $\lvert \mathcal{E} \rvert$ denotes the number of edges in the network. We generate simulated adjacency matrices using the stochastic block model \eqref{eq:sbm}  using the block probability $B$ specified by
\begin{equation*}
    B = 
\begin{bmatrix}
    b_{11} & b_{12} \\
    b_{21} & b_{22}
\end{bmatrix} = \alpha \times \pi \times
\begin{bmatrix}
    r_{11} & r_{12} \\
    r_{21} & r_{22}
\end{bmatrix},
\end{equation*}
where $\alpha$ controls the expected network density and $r_{ij}$ represents the relative proportion of edges between groups $i$ and $j$. We set $r_{11} = 1$ as the baseline ratio. This structure ensures that the relative proportions of within-group and between-group edges remain consistent when varying the overall density (via $\alpha$) while keeping the network size fixed. For example, when $K=2$, if there are $100$ edges in the first group, then there are approximately $60$ edges between two groups and $30$ edges in the second group. Thus, this proportional matrix allows us to vary the density without altering the underlying connectivity structure, hence we adopt this configuration throughout our simulation study, and focus on sparse (low density) graphs.  

We use the Normalised Mutual Information (NMI) \parencite{strehl2002cluster} to assess clustering performance. NMI measures the common indices between two cluster labellings. It ranges between $0$ and $1$, where $0$ indicates no mutual information and $1$ indicates identical labels. For each simulation setting, we obtain the average NMI and its corresponding standard deviation over $n_{rep}=50$ simulated graphs using $k$-means. All simulations were implemented using \texttt{Python} version 3.9.7 \parencite{python397}, using packages \texttt{Numpy} \parencite{harris2020array}, \texttt{Networkx} \parencite{hagberg2008networkx}, \texttt{Scikit-learn} \parencite{pedregosa2011scikit} and \texttt{SciPy} \parencite{2020SciPy-NMeth}. For the second (respectively third) step of Algorithm \ref{alg:ASE} (respectively Algorithm \ref{alg:DASE}), we need to compute the $d$ largest singular values and their corresponding singular vectors. However, for large networks, calculating all $N$ singular values and vectors is computationally expensive; for computational efficiency and consistency across clustering methods, we directly compute only the $d$ largest singular values and vectors using \texttt{svds} from the \texttt{scipy.sparse.linalg} package \parencite{2020SciPy-NMeth}. Similarly, for the benchmark spectral clustering algorithm, only the first $d$ eigenvalues and eigenvectors are computed. 


Simulation results for directed and undirected graphs are presented in Sections \ref{subsec:sim_dir} and \ref{subsec:sim_undir}, respectively. For both settings, the performance of the clustering methods with $k$-means are represented in figures as follows: spectral clustering (circle, blue); \asec\ (triangle, orange); \dasec\ (square, green). 

Code to implement the proposed \dasec\ algorithm using \texttt{Python}, including code for the simulation study and analysis of the real datasets in Section \ref{sec:data}, is available at \url{https://github.com/SinyoungPark2681/dase}.

\subsection{Directed Graphs}
\label{subsec:sim_dir}
For our first experiment to examine the effect of network density, we use a balanced two-cluster scenario, i.e., $\pi = (0.5, 0.5)^\top$, and set the block probability matrix $B$ as:
\[
B = \alpha \times \pi \times
    \begin{bmatrix}
        1 & 0.6 \\
        0.6 & 0.3
    \end{bmatrix}.
\]

The clustering performance on directed graphs using $k$-means is shown in Figure \ref{fig:Figure_2}. Figure \ref{fig:Fig2a} illustrates the average NMI with the corresponding standard deviation (shaded areas) over $n_{rep}=50$ simulated graphs for a fixed network size of $N=1,000$, while varying the network density with $\alpha \in (0.03, 0.2)$. Figure \ref{fig:Fig2b} illustrates the average NMI and corresponding NMI standard deviation (shaded areas) for a fixed expected network density of $\alpha=0.05$ while varying the network size. 

\begin{figure*}[!h]
    \centering
    \begin{subfigure}{0.4\textwidth}
        \centering
        \includegraphics[width=\linewidth]{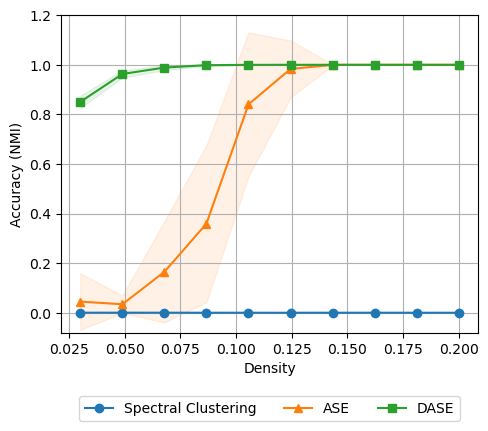}
        \caption{NMI with fixed network size}
        \label{fig:Fig2a}
    \end{subfigure}
    \begin{subfigure}{0.4\textwidth}
        \centering
        \includegraphics[width=\linewidth]{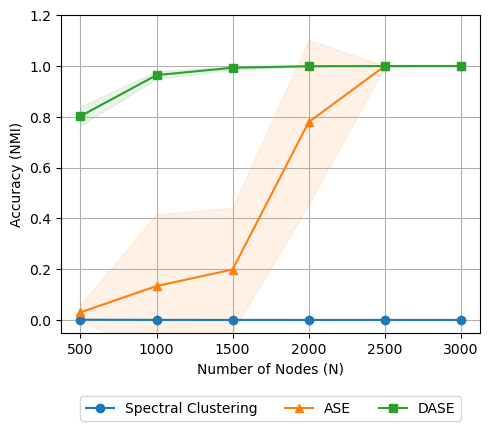}
        \caption{NMI with fixed network density}
        \label{fig:Fig2b}
    \end{subfigure}

    \caption{Comparison of clustering performance on directed graphs using $k$-means in terms of mean accuracy (NMI) over $n_{rep}=50$ simulated graphs when $K=2$ with $\pi = (0.5, 0.5)^\top$: (a) NMI with fixed network size ($N=1,000$) and varying network density; (b) NMI with fixed expected network density ($\alpha = 0.05$) and varying network sizes. In (a) and (b), shaded areas represent the standard deviations of the NMI values over the $n_{rep}$ simulated graphs.}
    \label{fig:Figure_2}
\end{figure*}

As seen from Figures \ref{fig:Fig2a} and \ref{fig:Fig2b}, spectral clustering fails completely to capture the underlying network structure. In contrast, \asec\ and \dasec, which are more suitable for the core-periphery graphs, are able to successfully recover the community structure in this setting. However, when the network is sparse or the network is small, \asec\ not only struggles to capture the network structure but also exhibits clustering instability, as indicated by its larger standard deviation (shaded areas) compared with that of \dasec.

\begin{figure}[!h]
    \centering
    \includegraphics[width=0.4\linewidth]{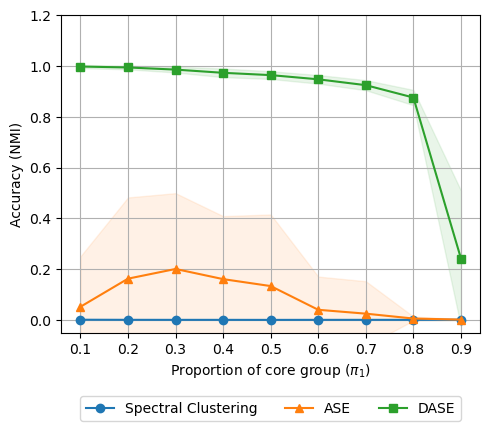}
    \caption{Comparison of clustering performance on directed graphs using $k$-means in terms of mean NMI (line) and the corresponding standard deviations (shaded areas) over $n_{rep}=50$ simulated graphs when $K=2$. In the simulation, the network size is fixed at $(N=1,000)$, and the block probability matrix $B$ is fixed, while varying the core group ratio ($\pi_1$) from $0.1$ to $0.9$.}
    \label{fig:Fig3}
\end{figure}

We examine clustering performance for unbalanced two-cluster networks in Figure \ref{fig:Fig3}. In this setting, the network size is fixed at $N=1,000$, the network density is fixed at $\alpha = 0.05$, and the relative edge-probability ratio $r_{ij}$s are fixed by
\[
\begin{bmatrix}
    r_{11} & r_{12} \\
    r_{21} & r_{22}
\end{bmatrix}
=
\begin{bmatrix}
    1 & 0.6 \\
    0.6 & 0.3
\end{bmatrix},
\]
while the group proportions vary, such as $\pi = (0.1, 0.9)^\top$ and $\pi = (0.9, 0.1)^\top$. Since the network density is fixed, it remains constant at $\alpha = 0.05$ across all group proportions. Here, we focus on the proportion of the size of the core group ($\pi_1$). As in the previous simulations, the average NMI and the corresponding standard deviation are computed over $n_{rep}=50$ simulated graphs.


As expected, spectral clustering fails to detect the network structures across different core sizes. For \asec, the accuracy is highest when the network is unbalanced ($\pi = (0.3, 0.7)^\top$). For both \asec\ and \dasec, the accuracy decreases as the core group becomes more dominant. This occurs because, when the core group is large, the network increasingly resembles a single dense cluster, making it challenging to distinguish between the core and periphery. In this case, \dasec\ cannot detect the structure when $90\%$ of the nodes belong to the core group ($\pi = (0.9, 0.1)^\top$), whereas \asec\ struggles to detect the structure, particularly when the core group is dominant.

\subsection{Undirected Graphs}
\label{subsec:sim_undir}
We now consider undirected, balanced two-cluster networks $( \pi = (0.5, 0.5)^\top )$ to evaluate the clustering performance of spectral clustering, \asec, and \dasec\ when the network density changes, shown in Figure \ref{fig:Fig5}. Similar to the directed case, our primary comparisons are based on clustering accuracy (average NMI) and the corresponding standard deviations (shaded areas). Figures \ref{fig:Fig5a} and \ref{fig:Fig5b} illustrate the average NMI with the corresponding standard deviation over $n_{rep}=50$ simulated graphs for a fixed network size of $N=1,000$ and a fixed expected network density of $\alpha=0.05$, respectively (with the same specifications as Section \ref{subsec:sim_dir}).

\begin{figure}[!h]
    \centering
    \begin{subfigure}{0.4\textwidth}
        \centering
        \includegraphics[width=\linewidth]{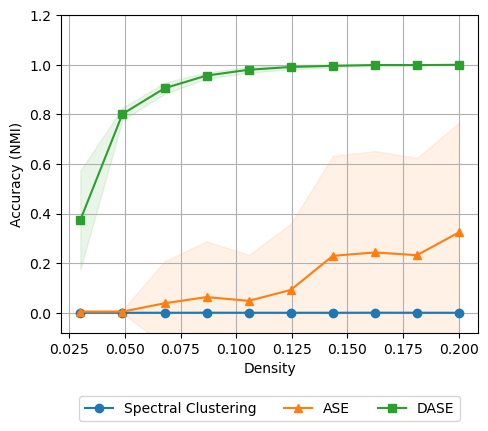}
        \caption{NMI with fixed network size}
        \label{fig:Fig5a}
    \end{subfigure}
    \begin{subfigure}{0.4\textwidth}
        \centering
        \includegraphics[width=\linewidth]{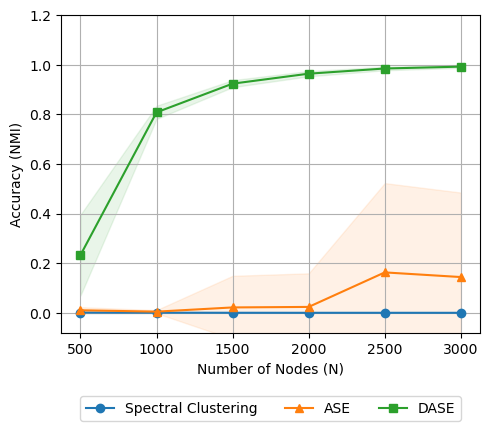}
        \caption{NMI with fixed network density}
        \label{fig:Fig5b}
    \end{subfigure}   

    \caption{Comparison of clustering performance on undirected graphs using $k$-means in terms of mean accuracy (NMI) over $n_{rep}=50$ simulated graphs when $K=2$ with $\pi = (0.5, 0.5)^\top$: (a) NMI with fixed network size ($N=1,000$) and varying network density; (b) NMI with fixed expected network density ($\alpha = 0.05$) and varying network sizes. In (a) and (b), shaded areas represent the standard deviations of the NMI values over the $n_{rep}$ simulated graphs.}
    \label{fig:Fig5}
\end{figure}

Figure \ref{fig:Fig5a} illustrates that \asec\ performs better than spectral clustering, as expected for core-periphery graphs. However, as the network becomes denser, \asec\ still struggles to identify the underlying network structures and exhibits large variability. In contrast, \dasec\ outperforms both methods, even when the network is sparse. When the network density is fixed (Figure \ref{fig:Fig5b}), ASE fails to recover accurate clustering assignments even as the network size increases, and it continues to yield a larger standard deviation. In comparison, \dasec\ converges more rapidly to the correct clustering and maintains a smaller standard deviation across network sizes.

Figure \ref{fig:Fig6} shows performance of the clustering methods for unbalanced two-cluster undirected networks. Similar to Section \ref{subsec:sim_dir}, for these simulations, the network size is fixed at $N=1,000$ and the network density is fixed at $\alpha = 0.05$, and the relative edge-probability ratios are fixed by $\begin{bmatrix}
    r_{11} & r_{12} \\
    r_{21} & r_{22}
\end{bmatrix}
=
\begin{bmatrix}
    1 & 0.6 \\
    0.6 & 0.3
\end{bmatrix}$,
while the proportional size of the core cluster, $\pi_1$, is varied.



In this setting, \asec\ consistently struggles to identify the network structure across all core cluster proportions. In contrast, \dasec\ achieves substantially better performance over a wide range of core size proportions, except $\pi_1 \geq 0.7$. Even in these extreme cases ($\pi_1 = 0.1$ and $\pi_1 \geq 0.8$), \dasec\ still outperforms \asec, although the overall accuracy decreases for both methods. Overall, we conclude that \dasec\ achieves better clustering performance in challenging settings where the core group forms the majority cluster, making the community structure less distinct.

\begin{figure}[!ht]
    \centering
    \includegraphics[width=0.4\linewidth]{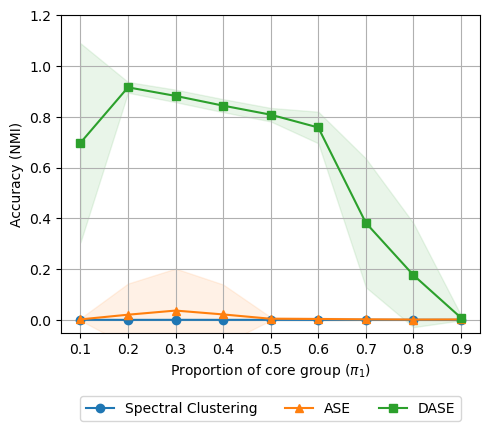}
    \caption{Comparison of clustering performance on undirected graphs using $k$-means in terms of mean NMI (line) and the corresponding standard deviations (shaded areas) over $n_{rep}=50$ simulated graphs when $K=2$. In the simulation, the network size is fixed at $(N=1,000)$, and the block probability matrix $B$ is fixed, while varying the core group ratio ($\pi_1$) from $0.1$ to $0.9$.}
    \label{fig:Fig6}
\end{figure}

Contrasting the results from Sections \ref{subsec:sim_dir} and \ref{subsec:sim_undir}, we observe a clear difference in clustering performance between directed and undirected graphs. This is expected, as directed graphs retain additional information through edge directions, while undirected graphs only preserve the presence or absence of connections. Overall, whilst \asec\ performs noticeably worse on undirected graphs than on directed graphs, \dasec\ provides superior clustering performance, particularly when the network is sparse or the network size is small, for both directed and undirected settings, and has much lower variability in clustering performance.

Additional results using the Gaussian Mixture Model (GMM) are shown in Appendix \ref{appendix:sims}, where we also consider directed and undirected graphs. In this setting, \asec\ and \dasec\ recover the community structure. However, \asec\ exhibits clustering instability as it has a larger standard deviation compared to \dasec.

\subsection{Assessing the DASE Embedding with Chernoff Information}\label{sec:chernoff}
The Chernoff information has been widely used as a metric for the separability between two distributions \parencite{chernoff1952measure, chernoff1956large}. Based on this measure, several works have proposed using Chernoff information to quantify community separation in spectral embedding spaces, see e.g., \cite{tang2018limit, priebe2019two, gallagher2024spectral}. A large Chernoff information indicates that two embedded distributions are more statistically distinguishable, and thus provides a more informative representation with which to perform graph clustering.  In this section, we investigate the size-adjusted Chernoff information associated to DASE, comparing it to that of ASE (this measure modifies the Chernoff information so that it is independent of the network size).  We first state a result from \cite{gallagher2024spectral} which simplifies calculation of the (size-adjusted) Chernoff information for weighted symmetric graphs, which is useful for characterising embeddings under an asymptotic Gaussian mixture representation, and will serve to aid our comparative discussion. 

\begin{lemma}[\cite{gallagher2024spectral}.]
    Let $K$ be the number of communities in the network. Suppose a graph connectivity matrix of interest is generated from a (weighted) stochastic block model as defined in \eqref{eq:wrdpg} with full-rank mean matrix $M=\mathbb{E}(G)$, $\text{rank}(M) = K$. Then the size-adjusted Chernoff information is given by
    \begin{equation}
        CI = \min_{k \ne l} \sup_{t \in (0,1)} \left[ \frac{t(1-t)}{2} \left\{ e^T M \Pi S_{kl}(t)^{-1} M e \right\} \right],
        \label{eq:Chernoff_Information}
    \end{equation}
    where $e=(e_k - e_l)$, $e_k \in \mathbb{R}^K$ is the standard basis vector with $k$-th element as one, zero elsewhere, $\Pi = \textbf{diag}(\pi_1, \dots, \pi_K)$, and
    \[
    S_{kl}(t) = (1-t)\textbf{diag}(C_k) + t \textbf{diag}(C_l) \in \mathbb{R}^{K \times K}
    \]
    is a convex combination of the diagonal matrices formed from rows $C_k$ and $C_l$ of the block variance matrix $V$.
\end{lemma}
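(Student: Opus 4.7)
The plan is to derive this formula by combining a central limit theorem for the rows of the spectral embedding with the closed-form Chernoff information between multivariate Gaussians, then simplifying through the SBM block structure.

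First, I would invoke the asymptotic normality of the rows of the spectral embedding under an SBM, as established by \parencite{tang2018limit} and extended to weighted SBMs in \parencite{gallagher2024spectral}. For a node $u$ assigned to community $k$, the embedded vector satisfies
$$\sqrt{N}\bigl(\hat X_u - \mu_k\bigr) \xrightarrow{d} \mathcal{N}(0, \Sigma_k)$$
as $N \to \infty$, where $\mu_k$ encodes the $k$-th row of $M$ (up to orthogonal alignment), and the community-specific covariance has the form $\Sigma_k = \Pi^{-1}\,\textbf{diag}(C_k)\,\Pi^{-1}$, with $C_k$ denoting the $k$-th row of the block variance matrix $V$.

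Second, I would apply the standard expression for the Chernoff information between two multivariate Gaussians $\mathcal{N}(\mu_k, \Sigma_k/N)$ and $\mathcal{N}(\mu_l, \Sigma_l/N)$, namely
$$C_{kl}(t) = \tfrac{N t(1-t)}{2}(\mu_k - \mu_l)^\top\bigl[(1-t)\Sigma_k + t\Sigma_l\bigr]^{-1}(\mu_k - \mu_l) + \tfrac{1}{2}\log\tfrac{|(1-t)\Sigma_k + t\Sigma_l|}{|\Sigma_k|^{1-t}|\Sigma_l|^t}.$$
The quadratic form is of order $N$ whereas the log-determinant term is $O(1)$, so after the size adjustment only the quadratic term contributes in the limit, which justifies the absence of a log-determinant term in the stated formula.

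Third, I would substitute $\mu_k - \mu_l = M e$ with $e = e_k - e_l$ and rewrite the pooled covariance as
$$(1-t)\Sigma_k + t\Sigma_l = \Pi^{-1}\bigl[(1-t)\textbf{diag}(C_k) + t\,\textbf{diag}(C_l)\bigr]\Pi^{-1} = \Pi^{-1}S_{kl}(t)\Pi^{-1}.$$
Inverting produces $\Pi\,S_{kl}(t)^{-1}\Pi$ inside the quadratic form; absorbing the symmetric $\Pi$ factors through the SBM decomposition of $M$ collapses the expression to $e^\top M\,\Pi\,S_{kl}(t)^{-1} M\,e$. Taking $\sup_{t \in (0,1)}$ to optimize the Chernoff exponent between each pair, and then $\min_{k \ne l}$ (since the misclassification risk is dominated by the closest pair), yields the stated formula \eqref{eq:Chernoff_Information}.

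The main obstacle will be justifying the specific form of the asymptotic covariance $\Sigma_k$, and in particular the placement of the size-adjustment factor $\Pi$ inside the inverse. For DASE, where edge weights follow a Poisson--Binomial distribution, this requires inheriting the weighted SBM CLT of \parencite{gallagher2024spectral} and verifying that its covariance structure aligns with $\textbf{diag}(C_k)$ for the doubled adjacency; the remaining algebraic simplification to the stated symmetric form is then routine.
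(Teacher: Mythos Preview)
The paper does not prove this lemma. It is stated as a known result, cited directly from \parencite{gallagher2024spectral}, and is used only as a tool to compare ASE and DASE numerically in the simulations and real-data sections. There is therefore no ``paper's own proof'' to compare against.

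Your outline is the standard route to such a formula (CLT for the embedded rows, Gaussian Chernoff information, size-adjustment to kill the $O(1)$ log-determinant term), and it is essentially how this expression is obtained in the Tang--Priebe and Gallagher et al.\ references. One technical point worth flagging: the covariance of the limiting Gaussian for the ASE under an SBM is not simply $\Pi^{-1}\textbf{diag}(C_k)\Pi^{-1}$; in the referenced works it involves factors of the form $\Delta^{-1}$ with $\Delta = M^\top \Pi M$ (or its analogue), and the single $\Pi$ appearing in the stated formula rather than the two your substitution produces reflects that more involved structure. You correctly identify this as the main obstacle, but your specific ansatz for $\Sigma_k$ would not, as written, collapse to $e^\top M\,\Pi\,S_{kl}(t)^{-1} M\,e$ without additional cancellation coming from the full covariance expression. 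If you want a self-contained derivation, you should work from the exact limiting covariance in \parencite{gallagher2024spectral} rather than guessing its form.
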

\bigskip

We now compare the Chernoff information for ASE and DASE using \eqref{eq:Chernoff_Information} from the above lemma. For ASE, the mean block matrix $M$ and the block variance matrix $C$ are defined as
\begin{align*}
    M_{\theta_i, \theta_j} &= B_{\theta_i, \theta_j} \in [0,1] \quad \text{and} \\
    C_{\theta_i, \theta_j} &= (1-M_{\theta_i, \theta_j}) M_{\theta_i, \theta_j} \in [0,1].
\end{align*}
where $\theta_i, \theta_j$ are the block labels of nodes $i$ and $j$, respectively.

As noted in Section \ref{sec:dase}, the doubled adjacency $\tilde{A}$ follows a Poisson-Binomial distribution.  In what follows, we normalise $\tilde{A}$ by $N$ so that its entries have bounded expectation (independent of $N$); its entries then  satisfy Assumption 1(ii) of \cite{gallagher2024spectral}, which facilitates the derivation of the Chernoff information for the weighted generalized random dot product model using the associated asymptotic Gaussian mixture model representation. The entries of the normalised doubled adjacency thus have mean  
\begin{align*}
    \tilde{Q}_{ij} &= \frac{1}{N} \sum_{k=1}^N Q_{ik}Q_{kj} \in [0,1] 
\end{align*}
and variance
\begin{align*}
    \tilde{Q}'_{ij} &= \frac{1}{N^2} \sum_{k=1}^N (1-Q_{ik}Q_{kj})Q_{ik}Q_{kj} \in [0,1].
\end{align*}

Then the mean block matrix $\tilde{M}$ and the block variance matrix $\tilde{C}$ for DASE are bounded and defined as
\begin{align*}
    \tilde{M}_{\theta_{i} \theta_{j}} &= \frac{1}{n_{\theta_i} n_{\theta_j}} \sum_{a \in G_{\theta_i}} \sum_{b \in G_{\theta_j}} \tilde{Q}_{ab} \quad \text{and} \\
    \tilde{C}_{\theta_{i} \theta_{j}} &= \frac{1}{n_{\theta_i} n_{\theta_j}} \sum_{a \in G_{\theta_i}} \sum_{b \in G_{\theta_j}} \tilde{Q}'_{ab},
\end{align*}
where $G_{\theta_i}$ denotes the $\theta_i$-th group, and $n_{\theta_i} = |G_{\theta_i}|$ denotes the size of the $\theta_i$-th group.\\




To empirically compare the Chernoff information of ASE and DASE, we conduct two simulation studies under the stochastic block model. In the first simulation, we use the probability matrix
\[
B = \alpha \times \pi \times
\begin{bmatrix}
    1 & 0.6 \\
    0.6 & 0.3
\end{bmatrix},
\]
where as before, $\alpha$ controls the overall network density. The number of nodes is fixed at $N=1,000$ and the group proportion is fixed at $\pi=(0.5,0.5)^\top$. We examine two regimes of network density: (i) an overall density range $\alpha \in [0.05,0.6]$ and (ii) a sparse density range $\alpha \in [0.03,0.2]$. For each value of $\alpha$, the Chernoff information of ASE and DASE is averaged over $n_{\mathrm{rep}}=50$ independent network realisations.

In the second simulation, we use the same probability matrix structure as follows:
\[
B = \alpha \times \pi \times
\begin{bmatrix}
    1 & 0.6 \\
    0.6 & 0.3
\end{bmatrix} = 
\begin{bmatrix}
    p & q \\ r & s
\end{bmatrix},
\]

with the number of nodes fixed at $N=1,000$ and the network density fixed at $\alpha=0.3$, while the group proportion $\pi=(\pi_1,\pi_2)^\top$ is varied.

To maintain a fixed overall density $\alpha$ as $\pi$ varies, the block probabilities are rescaled accordingly, so that the resulting probability matrix depends on $\pi$. For example, when $\pi=(0.1,0.9)^\top$ and $\pi=(0.9,0.1)^\top$, the corresponding probability matrices become
\[
B_1 =
\begin{bmatrix}
    0.83 & 0.50 \\
    0.50 & 0.25
\end{bmatrix}
\qquad \text{and} \qquad
B_2 =
\begin{bmatrix}
    0.33 & 0.20 \\
    0.20 & 0.10
\end{bmatrix},
\]
respectively. When the periphery group dominates (e.g., $\pi_2=0.9$), achieving the same target density requires a substantially higher within-core connection probability.



\begin{figure}[!h]
    \centering
    \begin{subfigure}[b]{0.32\textwidth}
        \centering
        \includegraphics[scale=0.41]{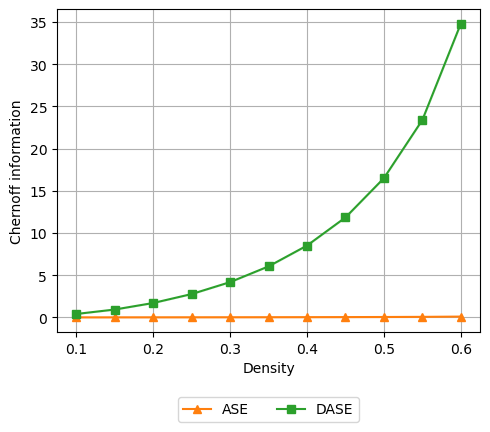}
        \caption{Overall network density ($\alpha_1$)}
        \label{fig:chernoff_density_overall}
    \end{subfigure}
    \begin{subfigure}[b]{0.32\textwidth}
        \centering
        \includegraphics[scale=0.41]{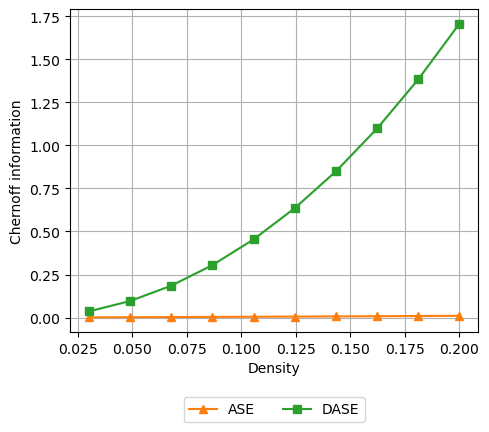}
        \caption{Sparse network density ($\alpha_2$)}
        \label{fig:chernoff_density}
    \end{subfigure}
    \begin{subfigure}[b]{0.32\textwidth}
        \centering
        \includegraphics[scale=0.41]{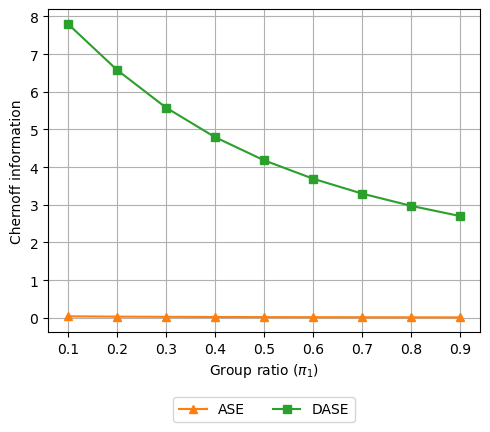}
        \caption{Varying the group ratio $(\pi_1, \pi_2)$}
        \label{fig:cheroff_ratio}
    \end{subfigure}
    \caption{Comparison of Chernoff information (CI) for core-periphery networks between ASE and DASE across varying overall network density ($\alpha_1$), sparse network density ($\alpha_2$) and varying group proportions ($\pi$).}
    \label{Fig:chernoff}
\end{figure}



The results of the experiments are shown in Figure \ref{Fig:chernoff}. When examining the effect of network density, we observe that the Chernoff information of both ASE and DASE increases as the network becomes denser. However, the increase for ASE is much smaller in magnitude than that for DASE, so the ASE curve appears nearly constant on the plotted scale (see Figures\ref{fig:chernoff_density_overall} and\ref{fig:chernoff_density}).

As observed in Figure \ref{fig:cheroff_ratio}, the Chernoff information of both ASE and DASE decreases as the core group becomes dominant. However, the Chernoff information for ASE appears nearly constant in the figure, since its magnitude is substantially smaller than that of DASE. In contrast, the Chernoff information for DASE remains consistently larger than that of ASE across all group proportions, even when the core group is dominant $(\pi_1 = 0.9)$. This indicates that the cluster distributions induced by the DASE embedding are more clearly separated than those obtained from ASE.


\subsection{Computational Cost Comparison}
In addition to clustering performance and embedding assessment with the Chernoff information, in this section we provide an analysis of the computational cost of the various embedding algorithms used in our simulation study. 

To this end, we use the same simulation settings as in the fixed expected network density ($\alpha = 0.05$) for a balanced two-cluster network ($\pi = (0.5, 0.5)^\top$) for directed graphs. We run each clustering method on graphs of increasing size, and record the runtime of each method, averaged over $n_{rep}=50$ simulated graphs of each particular network size. We then compute the average computation time relative to the average computation time for $N=500$ nodes.  This allows us to compare how the runtime of each method scales with the graph size, $N$, removing external factors such as efficiency of coding and implementation.

\begin{figure}[!h]
    \centering
        \includegraphics[width=0.4\linewidth]{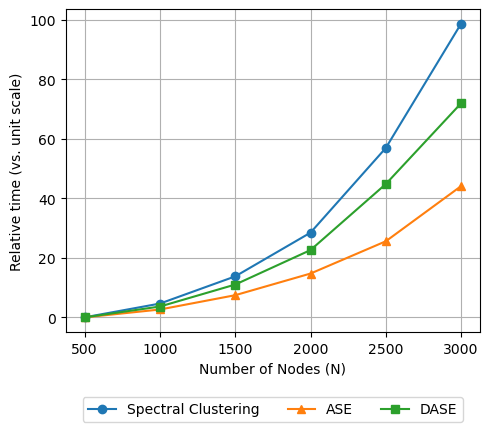}
    \caption{Comparison of computational complexity of clustering methods over $n_{rep}=50$ simulations, where the runtime for a network size of $N=500$ is set as the baseline, for $K=2$ with $\pi = (0.5, 0.5)^\top$. Here, we fix the expected network density ($\alpha = 0.05$) and vary the network size for directed graphs.}
    \label{fig:Fig_Comp_Relative}
\end{figure}


The results of the runtime analysis are shown in Figure \ref{fig:Fig_Comp_Relative}. As expected, the relative runtime of each method increases with the size of the simulated graphs under consideration. Perhaps unsurprisingly, \dasec\ increases faster than \asec, due to the additional computation required to form $\tilde{A} = AA$. Nevertheless, its scaling profile remains reasonably comparable to that of \asec. Given the improvement in clustering performance over competing methods across the range of examples studied in the simulations, we believe that this computational profile justifies the use of our proposed method.



For the undirected setting, the computational profile of the clustering methods is similar to that observed in the directed case shown in Figure \ref{fig:Fig_Comp_Relative}. However, the relative runtime is slightly lower than in the directed setting, as the embedding matrix $S$ in both \asec\ (Algorithm \ref{alg:ASE}) and \dasec\ (Algorithm \ref{alg:DASE}) has dimension $N \times d$ for undirected graphs, compared to $N \times 2d$ in the directed case.

\section{Application of \dasec\ to Real Data}
\label{sec:data}
In this section, we apply our graph clustering methodology to real-world datasets that are widely believed to exhibit core-periphery structural characteristics. We evaluate clustering accuracy using Normalized Mutual Information (NMI) when the ground truth is available, and use Chernoff information to identify the preferred embedding for each dataset; for the Chernoff information analysis, we convert the networks under consideration to be undirected in order to apply results from Section \ref{sec:chernoff}.  Note that for the analyses below, we also remove self-loops as per the assumption in Section \ref{sec:background}.

\subsection{Faculty Hiring Network}
\label{subsec:faculty}
The first dataset we consider is a faculty hiring network, described in \cite{clauset2015systematic}, containing hiring information from three different disciplines: Business, Computer Science, and History. For our analysis, we focus on the Computer Science dataset, which contains information on gender, faculty positions, and institutions from $23$ Canadian universities and $182$ American universities, collected from May 2011 to March 2012. In this network, each node represents an institution and each directed edge $(u, v)$ indicates that a faculty member at institution $v$ obtained their PhD degree from institution $u$; the network data comprises $N=205$ institutions with $|\mathcal{E}|=2,881$ unweighted directed hiring connections, resulting in a network density of $\alpha=0.0689$. For the Chernoff information analysis, this yields $|\mathcal{E}| = 2,693$ unweighted undirected edges and a corresponding network density of $\alpha = 0.1288$. The directed hiring network is shown in Figure \ref{fig:Fig8a}. \cite{clauset2015systematic} illustrate that the majority of individuals considered obtain their PhD degrees from the top $15\%$ of institutions (as measured by institutional prestige), representing a notional `core' of institutions, which exhibits higher interconnectivity within the network; hence \cite{clauset2015systematic} suggest that the faculty hiring network exhibits a core-periphery structure. The top 15\% prestige-ranked institutions in the dataset are shown in Figure \ref{fig:Fig8a} in yellow at the centre of the network, and reflect the findings of \cite{clauset2015systematic} that these `core' 
institutions produce more faculty members than other institutions.  The corresponding adjacency matrix is visualised in Figure \ref{fig:Fig8b}.



\begin{figure}[!h]
    \centering
    \begin{subfigure}{0.4\textwidth}
        \centering
        \includegraphics[scale=0.43]{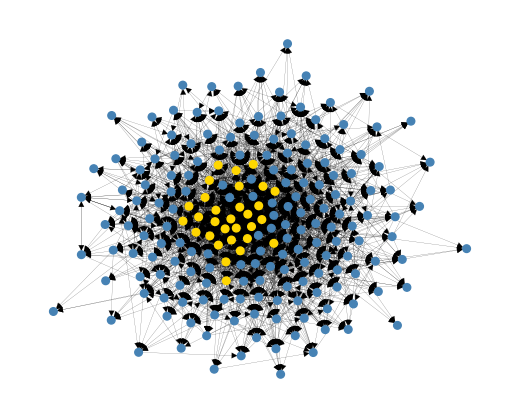}
        \caption{Network}
        \label{fig:Fig8a}
    \end{subfigure}
    \begin{subfigure}{0.4\textwidth}
        \centering
        \includegraphics[scale=0.43]{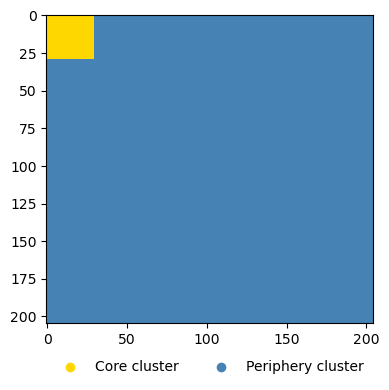}
        \caption{Heatmap}
        \label{fig:Fig8b}
    \end{subfigure}
    \caption{Faculty hiring network highlighting the top $15\%$ of universities: (a) the network visualisation with core universities shown in yellow, and (b) the corresponding adjacency matrix illustrating the core group.}
    \label{fig:Fig8}
\end{figure}

For the analysis of this dataset, since we expect one core group and one periphery group, the core representing the top $15\%$ institutions and the periphery the remaining ones, we assume that there are two clusters ($K=2$). We apply spectral clustering, \asec\ and \dasec\ (using $k$-means) to the hiring network. The corresponding clustering results are presented in Figure \ref{fig:Fig9}, where we show visualisations of the nodal cluster assignments and illustrate the core regions obtained from each clustering methods (in yellow). Inspecting Figures \ref{fig:Fig9a}-\ref{fig:Fig9c}, it is clear that spectral clustering fails to detect the core group. In contrast, \asec\ and \dasec\ have better clustering results to capture the core group, as indicated by the top-left corner of the clustering assignments.

\begin{figure*}[!h]
    \centering
    \begin{subfigure}{0.32\textwidth}
        \centering
        \includegraphics[scale=0.43]{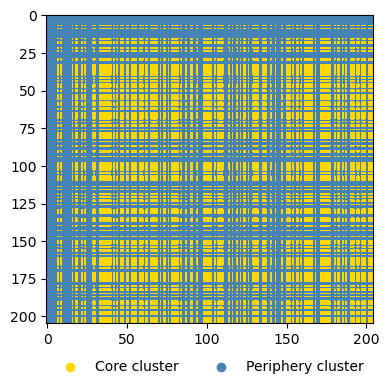}
        \caption{Spectral Clustering}
        \label{fig:Fig9a}
    \end{subfigure}
    \begin{subfigure}{0.32\textwidth}
        \centering
        \includegraphics[scale=0.43]{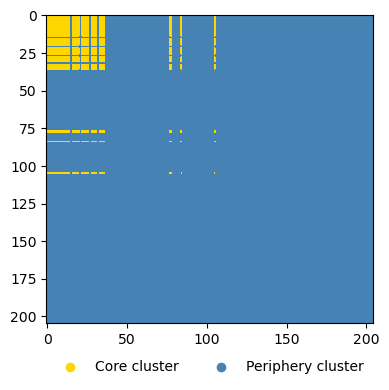}
        \caption{ASE}
        \label{fig:Fig9b}
    \end{subfigure}
    \begin{subfigure}{0.32\textwidth}
        \centering
        \includegraphics[scale=0.43]{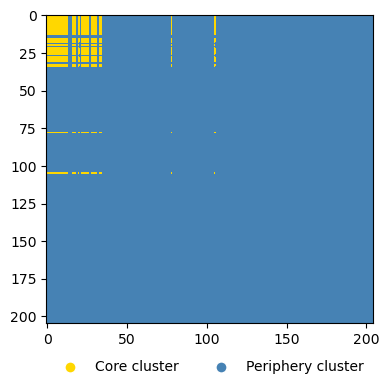}
        \caption{DASE}
        \label{fig:Fig9c}
    \end{subfigure}


    \caption{Visualisations of clustering results of universities in the faculty hiring network, based on clustering results obtained from spectral clustering, \asec\ and \dasec\ embeddings, with $K=2$ and $k$-means clustering. The the top $15\%$ university `core' is shown in yellow for each method.}
    \label{fig:Fig9}
\end{figure*}

Having compared the clustering results using the assignment visualisations, we now compare the clustering performance based on Normalized Mutual Information (NMI) on the directed hiring network and Chernoff information on the undirected hiring network, defined in \eqref{eq:Chernoff_Information}, using both $k$-means and GMM techniques; the results are presented in Table \ref{tab:faculty_entire}. Since our main interest lies in comparing performance between \asec\ and \dasec, we focus on these two methods.

For both NMI and Chernoff information, we repeat the clustering $30$ times with different random initialisations (i.e., different \texttt{random\_state} values in Python) for both $k$-means and GMM, while keeping the embedding fixed, and report the mean and standard deviation of the resulting NMI and Chernoff information values. In terms of NMI, \dasec\ performs better than \asec\ with $k$-means, whereas \asec\ performs better than \dasec\ with GMM. For the Chernoff information, DASE has higher values than ASE with both $k$-means and GMM. indicating that when combined with either clustering technique ($k$-means and GMM), DASE has higher separability in the spectral embedding space than ASE.

To further compare the improvement in NMI and Chernoff information, we consider the ratios between the values of the Chernoff information for DASE and ASE, defined as 
\begin{align*}
    \text{NMI}_{\text{ratio}} &= \frac{\text{NMI}_{\text{DASE}}}{\text{NMI}_{\text{ASE}}} \quad \text{and} \qquad
    CI_{\text{ratio}} = \frac{CI_{\text{DASE}}}{CI_{\text{ASE}}}
\end{align*}
A ratio greater than $1$ indicates that DASE achieves higher NMI or Chernoff information than ASE, suggesting improved clustering performance and better community separation in the embedding space.

The corresponding ratio table is in Table \ref{tab:faculty_ratio}. For NMI, as expected, using $k$-means yields greater than $1$, while using GMM yields smaller than $1$. On the other hand, for the Chernoff information, all ratio values, $CI_{\text{ratio}}$s, are greater than $1$. Moreover, the ratios are at least $29.987$, indicating that DASE exhibits much higher separability in the spectral embedding space than ASE. 

\begin{table*}[!h]
\centering
\resizebox{0.99\textwidth}{!}{%
\begin{tabular}{
   c
   c c c c
   c c c c
}
\toprule
 & \multicolumn{4}{c}{\textbf{NMI}} & \multicolumn{4}{c}{\textbf{Chernoff Information}} \\
\cmidrule(lr){2-5} \cmidrule(lr){6-9}
& \multicolumn{2}{c}{$k$-means} & \multicolumn{2}{c}{GMM} & \multicolumn{2}{c}{$k$-means} & \multicolumn{2}{c}{GMM} \\
\cmidrule(lr){2-3} \cmidrule(lr){4-5} \cmidrule(lr){6-7} \cmidrule(lr){8-9}
 & ASE & DASE & ASE & DASE & ASE & DASE & ASE & DASE \\
\midrule
& 0.5348 (0.00) & \textbf{0.5373} (0.02) & \textbf{0.3588} (0.00) & 0.3013 (0.00) & 0.0322 (0.01) & \textbf{1.0556} (0.00) & 0.0187 (0.00) & \textbf{0.5610} (0.00) \\

\bottomrule
\end{tabular}
}
\caption{Clustering performance (NMI and Chernoff information) of \asec\ and \dasec\ using $k$-means and GMM. For both NMI (directed network) and Chernoff information (undirected network), the reported mean and standard deviation are computed over $30$ clustering runs with different random initialisations. }
\label{tab:faculty_entire}
\end{table*}

\begin{table}[!h]
\centering
\begin{tabular}{@{\extracolsep{\fill}}ccccc@{}}
\toprule
 & \multicolumn{2}{c}{\textbf{NMI}} & \multicolumn{2}{c}{\textbf{Chernoff Information}} \\
\cmidrule(lr){2-3} \cmidrule(lr){4-5}
& \textbf{$k$-means} & \textbf{GMM} & \textbf{$k$-means} & \textbf{GMM} \\
\midrule
& 1.0047  & 0.8397 & 31.7726 & 29.9870 \\
\bottomrule
\end{tabular}
\caption{Ratios of DASE to ASE (DASE/ASE) for NMI (directed network) and Chernoff information (undirected network) using $k$-means and GMM.}
\label{tab:faculty_ratio}
\end{table}

\subsection{Air Traffic Network}
\label{subsec:air_traffic}

Our second real-world example is the crowdsourced air traffic data from the OpenSky Network during 2019 \parencite{strohmeier2021crowdsourced}. The dataset contains $N=15,439$ airports, including both commercial and private operations and $|\mathcal{E}| = 801,748$ flight connections, resulting in an edge density of $\alpha=0.0034$. Since flight connections have inherent origins and destinations, this network is directed. In this dataset, multiple connections exist between pairs of airports, making the network weighted. However, since the SBM model in \eqref{eq:sbm} assumes a binary adjacency matrix, we convert the weighted edges into binary connections, similar to \cite{gallagher2024spectral}, i.e., the edge $(u,v)$ indicates that at least one flight was made from origin $u$ to destination $v$.

\begin{figure}[!h]
    \centering
    \includegraphics[width=0.7\linewidth]{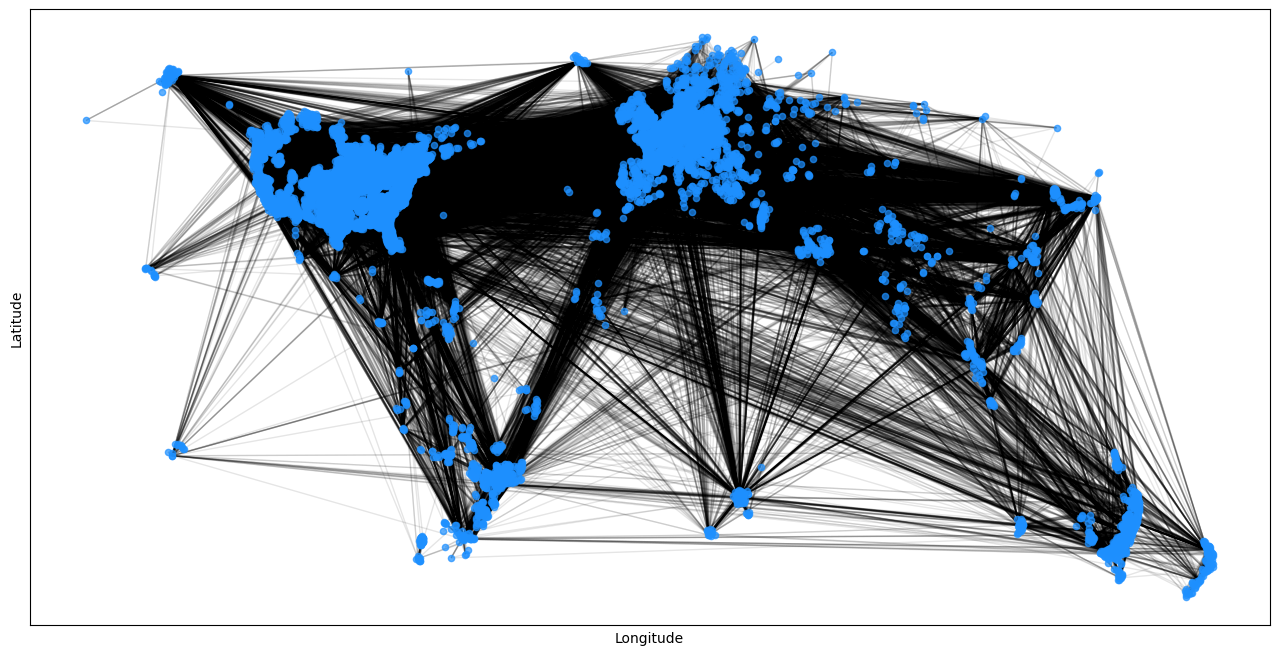}
    \caption{OpenSky air traffic network for 2019. Each node represents an airport, and each directed edges represents a flight connection.}
    \label{fig:Fig10}
\end{figure}

In our analysis, we perform the graph clustering procedures, considering two different values for the number of clusters. To identify one core group and one periphery group, we assume two clusters ($K=2$); as an alternate setting, we also consider six clusters ($K=6$), as suggested by \cite{gallagher2024spectral}. For the embedding dimension, we use $d=4$ as suggested by the profile likelihood method of \cite{zhu2006automatic}.

In contrast to the faculty hiring network in Section \ref{subsec:faculty}, there is no natural ground truth assessing clustering of the core and periphery nodes. Thus, it is not appropriate to define the ground truth using the highest-degree airports as core groups, since private airports and flights are also included, which distorts the network structure and prevents a meaningful labelling of the true core groups. Hence, we only examine the Chernoff information on the undirected network for different values of $K$; similar to the faculty hiring network, we hence convert the directed traffic network into an undirected network, yielding $|\mathcal{E}| = 613,202$ flight connections and a network density of $\alpha = 0.0051$.

A comparison of \asec\ and \dasec\ using $k$-means and GMM is summarised in Table \ref{tab:opensky_entire}. As we can see, for each cluster setting, DASE obtains higher Chernoff information values regardless of whether $k$-means or GMM is used. Similar to the faculty hiring network, we compare the ratio of the Chernoff information of ASE and DASE, shown in Table \ref{tab:opensky_ratio}. As we can see, all the ratio values are at least $94.8418$, indicating that DASE exhibits much higher separability in the spectral embedding space than ASE. 

\begin{table}[!h]
\centering
\resizebox{0.49\textwidth}{!}{%
\begin{tabular}{ccccc}
\toprule
 & \multicolumn{2}{c}{\textbf{$k$-means}} & \multicolumn{2}{c}{\textbf{GMM}} \\
\cmidrule(lr){2-3} \cmidrule(lr){4-5}
\textbf{\(K\)} & {ASE} & {DASE} & {ASE} & {DASE} \\
\midrule
2  & 0.0454 (0.00) & \textbf{16.546} (0.00) & 0.0059 (0.00) & \textbf{1.6531} (0.00) \\
6 & 0.0013 (0.00) & \textbf{0.5396} (0.21) & 0.0006 (0.00) & \textbf{0.0537} (0.01) \\
\bottomrule
\end{tabular}}
\caption{Comparison of the Chernoff information of ASE and DASE at $K=2$ and $K=6$ using $k$-means and GMM. For the Chernoff information on the undirected network, the reported mean and standard deviation are computed over $30$ clustering runs with different random initialisations.}
\label{tab:opensky_entire}
\end{table}

\begin{table}[!h]
\centering
\begin{tabular*}{0.4\textwidth}{@{\extracolsep{\fill}}ccc@{}}
\toprule
 & \multicolumn{2}{c}{\textbf{Chernoff Information}} \\
\cmidrule(lr){2-3}
\textbf{\(K\)} & \textbf{$k$-means} & \textbf{GMM} \\
\midrule
2  & 364.529  & 280.4302 \\
6  & 430.3368 & 94.8418   \\
\bottomrule
\end{tabular*}
\caption{Ratios of DASE to ASE (DASE/ASE) for the Chernoff information on the undirected network for $K=2$ and $K=6$, using $k$-means and GMM.}
\label{tab:opensky_ratio}
\end{table}

\section{Conclusion}
\label{sec:concs}

In this article we proposed \dasec, a novel graph clustering method for directed and undirected networks, which has particular suitability for core-periphery graphs. Our methodology leverages embedding the {\em doubled} adjacency matrix, which represents the possible ways to reach a node from another in two steps, thereby capturing more structural information about the graph than the adjacency matrix. We establish theoretical results showing that our clustering procedure achieves desirable consistency properties in terms of cluster recovery, as well as illustrating that the embedding afforded by Doubled Adjacency Spectral Embedding (DASE) for core-periphery graphs provides an improvement over the established Adjacency Spectral Embedding (ASE) technique when measured by the associated Chernoff information.

In addition to the theoretical comparison, through a range of numerical experiments, we evaluated the performance of our proposed method under different scenarios, including varying network sizes, densities and cluster sizes. Our experiments show that our proposed procedure obtains better accuracy and stability than other established methods for graph clustering, namely spectral clustering and ASE, particularly when the graph is sparse. Finally, we demonstrated \dasec\ on datasets from different scientific areas to illustrate its applicability in real-world settings. A natural extension would be to adapt the embedding to higher-order powers of the adjacency matrix, which we envisage would be beneficial in graph clustering tasks exhibiting different clique and homophily structures.  



\section{Acknowledgements}
\label{sec8}
SP is supported by a scholarship from the EPSRC Centre for Doctoral Training in Statistical Applied Mathematics at Bath (SAMBa), under the project EP/S022945/1. MAN gratefully acknowledges support from EPSRC grant EP/X002195/1.

\printbibliography

\clearpage

\appendix
\setcounter{figure}{0}
\renewcommand{\thefigure}{\Alph{section}.\arabic{figure}}
\setcounter{equation}{0}
\renewcommand{\theequation}{\Alph{section}.\arabic{equation}}

\section{Proof of Theoretical Results}
\label{appendix:proofs}

In this appendix, we provide details of the proofs of the results established in the main text. 



\paragraph{Additional notation.}
\label{sec:appendix_notations}


In what follows, recall from Section \ref{sec:notations} that $K$ denotes the number of clusters in a graph, and $d$ is the latent space dimension under consideration. 

Let the singular value decompositions of the matrices $Q$ and $A$ be as follows:
\begin{align*}
    Q = U \Sigma V^\top \quad \text{and} \quad A=U' \Sigma' V'^\top,
\end{align*}
where $U, U', V, V' \in \mathbb{R}^{N \times d}$ and $\Sigma, \Sigma' \in \mathbb{R}^{d \times d}$.
The concatenations of these matrices are
\[
W = [U|V] \in \mathbb{R}^{N \times 2d} \quad \text{and} \quad W' = [U'|V'] \in \mathbb{R}^{N \times 2d}.
\]
Similarly, let the singular value decompositions of the matrices $\tilde{Q}$ and $\tilde{A}$ be as follows:
\begin{align*}
    \tilde{Q} = QQ = \tilde{U} \tilde{\Sigma} \tilde{V}^\top \quad \text{and} \quad \tilde{A} = AA = \tilde{U}' \tilde{\Sigma}' \tilde{V}'^\top,
\end{align*}
where $\tilde{U}, \tilde{U}', \tilde{V}, \tilde{V}' \in \mathbb{R}^{N \times d}$ and $\tilde{\Sigma}, \tilde{\Sigma}' \in \mathbb{R}^{d \times d}$.
Therefore, the concatenations of these matrices are denoted
\[
\tilde{W} = [\tilde{U}|\tilde{V}] \in \mathbb{R}^{N \times 2d} \quad \text{and} \quad \tilde{W}' = [\tilde{U}'|\tilde{V}'] \in \mathbb{R}^{N \times 2d}.
\]

The results in this paper are asymptotic in the number of nodes $N$. However, we will sometimes drop this explicit dependence on $N$ of the matrices $A^{(N)}$, $Q^{(N)}$, $\tilde{A}^{(N)}$ and $\tilde{Q}^{(N)}$ for notational convenience in the proofs; note that the SVD quantities above should also be interpreted as being dependent on $N$ as in Section \ref{sec:theory}.

\subsection{Proof of Theorem \ref{thm_dir}}
\label{sec:proofthm1}

\begin{proof}
We first establish some intermediate results which will aid our consistency argument; in particular, we consider properties of $\tilde{A}^{(N)}\tilde{A}^{(N)}\!\!\phantom{.}^\top$ in order to infer properties of $\tilde{A}^{(N)}$.  


\begin{proposition}
\label{proposition1}
    Let $A^{(N)} \in \{0,1\}^{N \times N}$ be a sequence of adjacency matrices of directed graphs as defined in Section \ref{sec:notations} and let $\tilde{A}^{(N)} = A^{(N)}A^{(N)} \in [N]^{N \times N}$ denote the corresponding directed doubled adjacency matrices. 
Then the following holds with probability at least $1-4N^{-2}$:
    \begin{equation}
         \lVert \tilde{A}^{(N)}\tilde{A}^{(N)}\!\!\phantom{.}^\top - \tilde{Q}^{(N)}\tilde{Q}^{(N)}\!\!\phantom{.}^\top \rVert_F \leq N(N-1)^2 \sqrt{\log N} 
    \end{equation}
    and
    \begin{equation}
         \lVert \tilde{A}^{(N)}\!\!\phantom{.}^\top \tilde{A}^{(N)} - \tilde{Q}^{(N)}\!\!\phantom{.}^\top\tilde{Q}^{(N)} \rVert_F \leq N(N-1)^2 \sqrt{\log N}  .
    \end{equation}
\end{proposition}
\medskip

\begin{proof} 

    Similar to \cite{sussman2012consistent}, we would like to apply Hoeffding's inequality to obtain a deviation bound for the elements of $\tilde{A}\tilde{A}^\top$ from their expectation. 
    
            We first note that the entries of $\tilde{A}\tilde{A}^\top$ can be expressed as
     \begin{align}\label{eq:Awijuv}
        (\tilde{A}\tilde{A}^\top)_{uv} = \sum_{i=1}^N \sum_{j=1}^N \sum_{w=1}^N  A_{ui} A_{iw} A_{vj} A_{jw} =: \sum_{i,j,w=1}^N \mathcal{A}_{ijw}^{(u,v)},
    \end{align}
where $0\leq \mathcal{A}_{ijw}^{(u,v)} \leq 1$.\\

    To apply Hoeffding's inequality, we proceed by examining subsets of indices $(i,j,w)$ for which the random variables $\mathcal{A}_{ijw}^{(u,v)}$ are conditionally independent given $\tilde{Q}$ (equivalently given $Q$) for fixed $u,v \in [N]$, and we wish to bound the deviation 
    \begin{align}\label{eq:deviation}
        (\tilde{A}\tilde{A}^\top)_{uv} - (\tilde{Q}\tilde{Q}^\top)_{uv} = \sum_{w=1}^N \sum_{i=1}^N \sum_{j=1}^N A_{ui} A_{iw} A_{vj} A_{jw} - \sum_{w=1}^N \sum_{i=1}^N \sum_{j=1}^N Q_{ui} Q_{iw} Q_{vj} Q_{jw}.
    \end{align}

    We consider the two cases $u\neq v$ and  $u=v$ separately.\\

\underline{Case I: $u\neq v$.}\newline

    For fixed $(u,v)$, let $\mathcal{I} = \left\{ w,u \right\}$ and $\mathcal{J} = \left\{ w,v\right\}$. We consider the following three subsets of $(i,j) \in [N]^2$:
    \begin{align}\nonumber
        (i)& \quad \mathcal{S}_1 = \left\{ (i,j) \in [N]^2 : i \in \mathcal{I} \text{ or } j \in \mathcal{J} \right\}\\ \label{eq:S1S2S3}
        (ii)& \quad \mathcal{S}_2 = \left\{ (i,j) \in [N]^2: (i,j) \notin \mathcal{S}_1, i=j \right\}\\ 
        \nonumber
        (iii)& \quad \mathcal{S}_3 = \left\{ (i,j) \in [N]^2: i \notin \mathcal{S}_1, i\neq j \right\}, 
    \end{align}
\medskip
where $\mathcal{S}_1 \cup \mathcal{S}_2 \cup \mathcal{S}_3 = [N]^2$.

    \begin{itemize}\item[(i)]
    For $(i,j) \in \mathcal{S}_1$, the expression of $\mathcal{A}_{ijw}^{(u,v)}$ in \eqref{eq:Awijuv} involves diagonal entries of the matrix $A$. Since in our set up, we do not allow self-loops (see Section \ref{sec:background}), we have that $A_{ii} = 0$; the deviation in \eqref{eq:deviation} restricted to $(i,j)\in \mathcal{S}_1$ is then of the form
    \begin{align}\label{eq:devS1a}
        0 - \sum_{(i,j) \in \mathcal{S}_1} \sum_{w=1}^N Q_{ui} Q_{iw} Q_{vj} Q_{jw}.
    \end{align}

When $w\in\{u,v\}$, $|\mathcal{S}_1| = 3N-2$ and when $w\notin\{u,v\}$, $|\mathcal{S}_1| = 4N-4$.  Since $0 \leq Q_{ab} \leq 1$ for all $a,b \in [N]$, we have
    \begin{align}\label{eq:devS1bounda}
       0 \leq \sum_{(i,j) \in \mathcal{S}_1} \sum_{w=1}^N \mathcal{Q}_{ijw}^{(u,v)} = \sum_{(i,j) \in \mathcal{S}_1} \sum_{w\in \{u,v\}} \mathcal{Q}_{ijw}^{(u,v)}+\sum_{(i,j) \in \mathcal{S}_1} \sum_{w\notin \{u,v\} } \mathcal{Q}_{ijw}^{(u,v)}\leq  (2N-2)^2+2N,
    \end{align}
    where $\mathcal{Q}_{wij}^{(u,v)} = \mathbb{E}\left( \mathcal{A}_{wij}^{(u,v)} \right) = Q_{ui} Q_{iw} Q_{vj} Q_{jw}$. 
  \item[(ii)]  For $(i,j) \in \mathcal{S}_2$, since $\mathbb{E}(A_{iw}^2 ) = Q_{iw}$, the deviation \eqref{eq:deviation} becomes
    \begin{align}\label{eq:devS2a}
        \sum_{i \notin \{u,v,w \}} \sum_{w=1}^N \left( A_{ui} A_{iw}^2 A_{vi} - Q_{ui} Q_{iw} Q_{vi} \right).
    \end{align}

Note that when $w\in\{u,v\}$, $|\mathcal{S}_2| = |[N]\setminus \{u,v,w \} | = N-2$ and when $w\notin\{u,v\}$, $|\mathcal{S}_2| = |[N]\setminus \{u,v,w \} |= N-3$.  The $\{A_{ui} A_{iw}^2 A_{vi}\}$ in \eqref{eq:devS2a} are a set of $2\times (N-2) + (N-2)(N-3) = (N-1)(N-2)$ bounded and conditionally independent random variables given $Q$, and thus by Hoeffding's inequality 

\begin{equation}\label{eq:hoeffdingS2a}
    \mathbb{P} \left( \lvert (\tilde{A} \tilde{A}^\top)_{uv} - (\tilde{Q} \tilde{Q}^\top)_{uv} \rvert \geq  \sqrt{2(N-1)(N-2) \log N} \mid Q \right) \leq 2N^{-4}.
\end{equation}
\medskip

\item[(iii)]    When $(i,j) \in \mathcal{S}_3$, we have
$        \mathbb{E} \left( A_{ui} A_{iw} A_{vj} A_{jw} 
        \right) = Q_{ui} Q_{iw} Q_{vj} Q_{jw}$.
    Therefore, the deviation for $\mathcal{S}_3$ is of the form 
    \begin{align}\label{eq:devS3a}
        \sum_{(i,j) \in \mathcal{S}_3} \sum_{w=1}^N \left( A_{ui} A_{iw} A_{vj} A_{jw} - Q_{ui} Q_{iw} Q_{vj} Q_{jw} \right). 
    \end{align}
Similar to (i) and (ii) above, when $w\in\{u,v\}$, $|\mathcal{S}_3| = (N-2)^2$ and when $w\notin\{u,v\}$, $|\mathcal{S}_3| = (N-2)(N-3) + 1$ (since $\mathcal{S}_3 = [N]^2 \setminus \{\mathcal{S}_1 \cup \mathcal{S}_2\}$). 
Since all indices are different, the $\{A_{ui} A_{iw} A_{vj} A_{jw}\}$ in \eqref{eq:devS3a} are a set of $2\times (N-2)^2 + (N-2)[(N-2)(N-3)+1] = (N-2)[(N-2)(N-1)+1]$ bounded and conditionally independent random variables given $Q$, and thus by Hoeffding's inequality 

\begin{equation}\label{eq:hoeffdingS3a}
    \mathbb{P} \left( \lvert (\tilde{A} \tilde{A}^\top)_{uv} - (\tilde{Q} \tilde{Q}^\top)_{uv} \rvert \geq \sqrt{2(N-2)[(N-1)(N-2)+1] \log N } \mid Q \right) \leq 2N^{-4}.
\end{equation}
\end{itemize}\medskip

Since $(\tilde{A} \tilde{A}^\top)_{uv} - (\tilde{Q} \tilde{Q}^\top)_{uv}$ is equal to
    \begin{align*}
       \sum_{(i,j) \in \mathcal{S}_1} \sum_{w=1}^N  \left( \mathcal{A}_{wij}^{(u,v)} - \mathcal{Q}_{wij}^{(u,v)} \right) +   \sum_{(i,j) \in \mathcal{S}_2} \sum_{w=1}^N\left( \mathcal{A}_{wij}^{(u,v)} - \mathcal{Q}_{wij}^{(u,v)} \right) + \sum_{(i,j) \in \mathcal{S}_3} \sum_{w=1}^N\left( \mathcal{A}_{wij}^{(u,v)} - \mathcal{Q}_{wij}^{(u,v)} \right), 
    \end{align*}
taking the union bound over $\mathcal{S}_1$, $\mathcal{S}_2$ and $\mathcal{S}_3$, we combine \eqref{eq:devS1bounda}, \eqref{eq:hoeffdingS2a} and \eqref{eq:hoeffdingS3a} to give
\begin{equation}\label{eq:unoteqv}
    \mathbb{P} \left( \lvert (\tilde{A}\tilde{A}^\top)_{uv} - (\tilde{Q}\tilde{Q}^\top)_{uv} \rvert^2 \geq (N-1)^4\log N \mid Q  \right) \leq 4N^{-4} \qquad \text{for sufficiently large $N$}.
\end{equation}
\medskip

\underline{Case II: $u=v$.}\newline

When $u=v$, we take a similar approach to Case I above. In particular, we let $\mathcal{I} = \mathcal{J} = \left\{ w,u \right\}$ and consider the same three subsets of $(i,j) \in [N]^2$, $\mathcal{S}_1$, $\mathcal{S}_2$ and $\mathcal{S}_3$, as defined in equations \eqref{eq:S1S2S3}. We again consider the deviations \eqref{eq:deviation} for each of these subsets.
    \begin{itemize}\item[(i)]
    When $w=u$, $|\mathcal{S}_1| = 2N-2$ and when $w\neq u$, $|\mathcal{S}_1| = 4N-4$; hence for $(i,j) \in \mathcal{S}_1$, again considering the presence of self-loops, the deviation has the form
    \begin{align}\label{eq:devS1b}
        0 - \sum_{(i,j) \in \mathcal{S}_1} \sum_{w=1}^N Q_{ui} Q_{iw} Q_{uj} Q_{jw},
    \end{align}
with
    \begin{align}\label{eq:devS1boundb}
       0 \leq \sum_{(i,j) \in \mathcal{S}_1} \sum_{w=1}^N \mathcal{Q}_{ijw}^{(u,v)} \leq 1\times (2N-1) + (N-1) \times 4(N-1) = 4(N-1)^2 + 2N-1.
    \end{align}
  \item[(ii)]  For $(i,j) \in \mathcal{S}_2$, the deviation \eqref{eq:deviation} becomes
    \begin{align}\label{eq:devS2b}
        \sum_{i \notin \mathcal{I}} \sum_{w=1}^N \left( A_{ui}^2 A_{iw}^2 - Q_{ui} Q_{iw}  \right),
    \end{align}
where $A^2_{ui}$ and $A^2_{iw}$ are conditionally independent Bernoulli random variables given $Q$ with expectation $Q_{ui} Q_{iw}$. By similar counting arguments to Case I (considering when $w=u$ and $w\neq u$), the summands $\{A_{ui}^2 A_{iw}^2\}$ in \eqref{eq:devS2b} form a set of $1\times (N-1) + (N-1)(N-2) = (N-1)^2$ bounded and conditionally independent random variables given $Q$, and thus by Hoeffding's inequality 

\begin{equation}\label{eq:hoeffdingS2b}
    \mathbb{P} \left( \lvert (\tilde{A} \tilde{A}^\top)_{uu} - (\tilde{Q} \tilde{Q}^\top)_{uu} \rvert \geq  \sqrt{2(N-1)^2 \log} N \mid Q \right) \leq 2N^{-4}.
\end{equation}
\medskip

\item[(iii)]    Similar to Case I, when $(i,j) \in \mathcal{S}_3$ the deviation is of the form 
    \begin{align}\label{eq:devS3b}
        \sum_{(i,j) \in \mathcal{S}_3} \sum_{w=1}^N \left( A_{ui} A_{iw} A_{uj} A_{jw} - Q_{ui} Q_{iw} Q_{uj} Q_{jw} \right). 
    \end{align}
and the random variables in \eqref{eq:devS3b} form a set of $1\times (N-1)(N-2) + (N-1)[(N-2)(N-3)] = (N-1)(N-2)^2$ bounded and conditionally independent random variables given $Q$, and thus by Hoeffding's inequality 

\begin{equation}\label{eq:hoeffdingS3b}
    \mathbb{P} \left( \lvert (\tilde{A} \tilde{A}^\top)_{uu} - (\tilde{Q} \tilde{Q}^\top)_{uu} \rvert \geq \sqrt{ 2(N-1)(N-2)^2 \log N } \mid Q \right) \leq 2N^{-4}.
\end{equation}
\end{itemize}


Thus taking the union bound over $\mathcal{S}_1$, $\mathcal{S}_2$ and $\mathcal{S}_3$, combining \eqref{eq:devS1boundb}, \eqref{eq:hoeffdingS2b} and \eqref{eq:hoeffdingS3b} we have
\begin{equation}\label{eq:ueqv}
    \mathbb{P} \left( \lvert (\tilde{A}\tilde{A}^\top)_{uu} - (\tilde{Q}\tilde{Q}^\top)_{uu} \rvert^2 \geq (N-1)^4 \log N \mid Q  \right) \leq 4N^{-4} \qquad \text{for sufficiently large $N$}.
\end{equation}

Finally, taking the union bound over all $(u,v)\in [N]^2$ using \eqref{eq:unoteqv} and \eqref{eq:ueqv}, we obtain the deviation bound for $\tilde{A}\tilde{A}^\top - \tilde{Q}\tilde{Q}^\top$ in Frobenius norm as  
\begin{equation}\label{eq:finalbound}
    \mathbb{P} \left( \lVert \tilde{A}\tilde{A}^\top - \tilde{Q}\tilde{Q}^\top \rVert^2_F \geq N^2(N-1)^4 \log N \mid Q  \right) \leq 4N^{-2}.
\end{equation}
\medskip

Therefore, with probability at least $1-4N^{-2}$, we have
\begin{equation}
    \lVert \tilde{A}\tilde{A}^\top - \tilde{Q}\tilde{Q}^\top \rVert_F \leq N(N-1)^2 \sqrt{\log N}.
\end{equation}

The proof for $\lVert \tilde{A}^\top \tilde{A} - \tilde{Q}^\top \tilde{Q} \rVert_F$ follows analogously.
    
\end{proof}
\bigskip
\noindent
Proposition \ref{proposition1} establishes consistency of $\tilde{A}\tilde{A}^T$ to its expectation, which will facilitate establishing results on the singular values of $\tilde{A}$. Based on the Proposition \ref{proposition1}, we establish the misclustering rate of DASE when the graph is directed in Theorem \ref{thm_dir} using the following intermediate lemmas.

\begin{lemma}\label{lemma1} For $\tilde{Q}$ as defined in Section \ref{sec:dase},
\begin{enumerate}
    \item     It always holds that $\sigma_{d+1}(\tilde{Q}) = 0$ and $\sigma_1(\tilde{Q}) \leq N^2$.
    \item  It almost always holds that $(\min_i \pi_i) \cdot \tilde{b} N^2 \leq \sigma_d(\tilde{Q})$.

\end{enumerate}
\end{lemma}
\medskip
\begin{proof} \text{}\\
    \begin{enumerate}
        \item 
        Since $Q \in [0,1]^{N \times N}$, $\tilde{Q} = QQ \in [0,N]^{N \times N}$. The entries of the non-negative matrix $\tilde{Q}\tilde{Q}^T$ are bounded by $N^3$ since
        \[
        (\tilde{Q} \tilde{Q}^T)_{ij} = \sum_{k=1}^N \tilde{Q}_{ik} \tilde{Q}_{jk} \leq N^3,
        \]
        and hence the rowsums of $\tilde{Q} \tilde{Q}^T$ are bounded by $N^4$. Therefore, we have
            $\sigma_1^2(\tilde{Q}) = \lambda_1(\tilde{Q}\tilde{Q}^T) \leq N^4$ and thus $\sigma_1(\tilde{Q}) \leq N^2$.
        Since the rank of the matrix $Q$ is at most $d$, the rank of $\tilde{Q}$ is at most $d$ as well, i.e., $\sigma_{d+1}(\tilde{Q}) = 0$.
    
        \item 
       
        Since $Q: = ZBZ^\top$ (Section \ref{sec:sbm}), using the definitions of $D_n$ and $\tilde{B}$ from Section \ref{sec:notations}, we have
        \begin{align*}
            \tilde{Q} = QQ &= ZBZ^\top ZBZ^\top \\
            &= Z BD_n B Z^\top \qquad \textrm{since}\ Z^\top Z = D_n = \text{diag}(n_1, \dots, n_K)\\
            &= Z \tilde{B} Z^\top \qquad \qquad \textrm{since}\ \tilde{B} = B D_n B  \\
            &= Z \tilde{L} \tilde{\Lambda}\tilde{R}^\top Z^\top, 
        \end{align*}
using the notations for the singular value decomposition of $\tilde{B}$ from Section \ref{sec:notations}, $\tilde{B} = \tilde{L} \tilde{\Lambda} \tilde{R}^\top$, with $\tilde{L}, \tilde{R} \in \mathbb{R}^{K \times d}$.\\

        Let us define $\tilde{X} = Z \tilde{L} \tilde{\Lambda}^{1/2}$, $\tilde{Y} = Z \tilde{R} \tilde{\Lambda}^{1/2}$, so that 
        $\tilde{Q}  = \tilde{X}\tilde{Y}^\top$.
        The non-zero eigenvalues of $(\tilde{X}\tilde{Y}^\top)(\tilde{X}\tilde{Y}^\top)^\top$ are the same as the non-zero eigenvalues of $\tilde{Y}^\top \tilde{Y} \tilde{X}^\top \tilde{X}$. \\
        
        We consider $\tilde{Y}^\top \tilde{Y}$ and $\tilde{X}^\top \tilde{X}$ separately.\\
        
        The smallest non-zero eigenvalue of $\tilde{Y}^\top \tilde{Y}$ is obtained as follows. First, we note that
        \begin{align*}
            \tilde{Y}^\top \tilde{Y} &= \tilde{\Lambda}^{1/2} \tilde{R}^\top Z^\top Z \tilde{R} \tilde{\Lambda}^{1/2} \\
            &= \tilde{\Lambda}^{1/2} \tilde{R}^\top D_n \tilde{R} \tilde{\Lambda}^{1/2}\\
&=\tilde{\Lambda}^{1/2} S \tilde{\Lambda}^{1/2},
        \end{align*}
        by letting $S = \tilde{R}^\top D_n \tilde{R}$.\\ 

        Let $x\in \mathbb{R}^{d}$ be a generic vector.  Then we have
        \begin{align}\nonumber
        x^\top \tilde{R}^\top D_n \tilde{R} x &= (\tilde{R}x)^\top D_n \tilde{R}x\\
            &=\nonumber y^\top D_n y \\[3pt] \nonumber
            &= \sum_{k=1}^d n_k \cdot y_k^2 \\
            &\geq N \cdot  \min_{i \in \{1, \dots, d\} } \pi_i \cdot \lVert y \rVert^2 = N \tilde{\pi} \lVert y \rVert^2 \qquad \textrm{since by definition}\  \tilde{\pi} = \min_{i \in \{1, \dots, d\} } 
        \pi_i.\label{eq:uscRineq}
        \end{align}
If we further define $y = \tilde{R}x$, then $\lVert y \rVert^2 = \lVert \tilde{R}x \rVert^2 = \lVert x \rVert^2$.  Thus examining the Rayleigh quotient of $S$, $\frac{x^\top \tilde{R}^\top D_n \tilde{R} x}{\lVert x \rVert^2}$, we find from \eqref{eq:uscRineq} that 
 $\lambda_d(S)\geq N \tilde{\pi}$ (see e.g., \cite{Horn_Johnson_1985}).
 
Hence
        \begin{align*}
            \lambda_d(\tilde{Y}^\top \tilde{Y}) &= \lambda_d(\tilde{\Lambda}^{1/2} S \tilde{\Lambda}^{1/2}) \\
            &= \min_{\lVert z \rVert=1} z^\top \tilde{\Lambda}^{1/2} S \tilde{\Lambda}^{1/2} z \\
            &\geq \lambda_d(S) \cdot \min_{\lVert z \rVert=1} \lVert \tilde{\Lambda}^{1/2}z \rVert^2 \\
            &\geq N \tilde{\pi} \cdot \lambda_d(\tilde{\Lambda}) = N \tilde{\pi} \cdot \lambda_d(\tilde{B}).
        \end{align*}
 
        By similar arguments, we also obtain $\lambda_d(\tilde{X}^\top \tilde{X}) \geq N \tilde{\pi} \cdot \lambda_d(\tilde{B})$.\\
        
        Now by the eigenvalue inequality for matrix products (see e.g. \cite{zhang2006eigenvalue}),
        \begin{align*}
            \lambda_d(\tilde{Y}^\top \tilde{Y} \tilde{X}^\top \tilde{X}) &\geq \lambda_d(\tilde{Y}^\top \tilde{Y}) \lambda_d(\tilde{X}^\top \tilde{X}) \\
            &\geq \left( N \tilde{\pi} \cdot \lambda_d(\tilde{B}) \right)^2.
        \end{align*}
        Here, we can write that $\lambda_d(\tilde{B}) = \tilde{b} N$, where $0<\tilde{b} \in \mathbb{R}$.
        Therefore, we have
        \begin{align*}
            \sigma_d^2(\tilde{Q}) \geq \left( \tilde{\pi} \tilde{b} N^2 \right)^2,
        \end{align*}
        where $\tilde{\pi} = \min_{i \in \{1, \dots, d\} } 
        \pi_i$.
        \end{enumerate}
\end{proof}
\bigskip\noindent 
The following lemmas prove consistency (up to an orthogonal rotation) of the embedding of $\tilde{A}$, which follow similar arguments as Lemma 2 and Lemma 3 of \cite{sussman2012consistent} for ASE.

\begin{lemma}
\label{lemma2} Let $\tilde{W}$ and $\tilde{W}'$ be defined as above. Then
    it almost always holds that there exists an orthogonal matrix $\tilde{O} \in \mathbb{R}^{2d \times 2d}$ such that
    \begin{equation}
        \lVert \tilde{W}\tilde{O} - \tilde{W}'\rVert_F \leq \frac{2 N(N-1)^2 \sqrt{ \log N}}{\sigma_d^2(\tilde{Q})} \leq \frac{2}{\tilde{\pi}^2 \tilde{b}^2} \frac{(N-1)^2}{N^2}\frac{\sqrt{\log N}}{N}.
    \end{equation}
\end{lemma}
\medskip
\begin{proof}
    Let $\mathcal{S} = (\frac{1}{2} \sigma_d^2(\tilde{Q}), \infty)$. It almost always holds that exactly $d$ eigenvalues of $\tilde{A}\tilde{A}^\top$ and $\mathbb{E}(\tilde{A}\tilde{A}^\top)$ are in $\mathcal{S}$. Additionally, the minimum distance $\delta$ between $\tilde{U}$ in $\mathcal{S}$ and any eigenvalue of $\tilde{U}'$ not in $S$ is $\delta > \sigma_d^2(\tilde{Q})$. 
    
    By similar arguments to \cite{sussman2012consistent}, using the Frobenius norm bound found in Proposition \ref{proposition1}, the Davis-Kahan Theorem (see e.g. \cite{yu2015useful}) implies that  there exists an orthogonal matrix $\tilde{O}_1 \in \mathbb{R}^{d \times d}$ such that
    \begin{equation*}
        \lVert \tilde{U}\tilde{O}_1 - \tilde{U}'\rVert_F \leq \frac{\sqrt{2}}{\delta} \lVert \tilde{A}\tilde{A}^\top - \tilde{Q}\tilde{Q}^\top \rVert_F \leq \frac{\sqrt{2} N (N-1)^2 \sqrt{\log N}}{\sigma_d^2(\tilde{Q})}.
    \end{equation*}
    Similarly, for $\tilde{A}^\top \tilde{A}$, there exists there exists an orthogonal matrix $\tilde{O}_2 \in \mathbb{R}^{d \times d}$ such that
    \[
    \lVert \tilde{V} \tilde{O}_2 - \tilde{V}'\rVert_F \leq \frac{\sqrt{2} N (N-1)^2 \sqrt{\log N}}{\sigma_d^2(\tilde{Q})}.
    \]
\medskip
    Taking $\tilde{O} \in \mathbb{R}^{2d \times 2d}$ as the direct sum of $\tilde{O}_1$ and $\tilde{O}_2$, we then have that
    \begin{align}\nonumber
        \lVert \tilde{W}\tilde{O} - \tilde{W}' \rVert_F &= \sqrt{\lVert \tilde{U}\tilde{O}_1 - \tilde{U}'\rVert^2_F + \lVert \tilde{V} \tilde{O}_2 - \tilde{V}'\rVert^2_F} \\[3pt] \label{eq:directsum}
        &\leq \frac{2 N(N-1)^2 \sqrt{ \log N}}{\sigma_d^2(\tilde{Q})}.
    \end{align}
    Since $\sigma_d^2(\tilde{Q}) \geq \left( \tilde{\pi} \tilde{b} N^2 \right)^2$ from Lemma \ref{lemma1}, we thus have
    \begin{align*}
        \lVert \tilde{W}\tilde{O} - \tilde{W}' \rVert_F \leq \frac{2}{\tilde{\pi}^2 \tilde{b}^2} \frac{(N-1)^2}{N^2}\frac{\sqrt{\log N}}{N}.
    \end{align*}
\end{proof}
\bigskip\noindent

\begin{lemma}\label{lemma3} Let $\tilde{X}$ and $\tilde{Y}$ be defined as in the proof of Lemma \ref{lemma1}, and let $\hat{\beta}$, $\tilde{\beta}$ be defined as in Assumption \ref{assum:3} in Section \ref{sec:notations}.
    For all $u,v$ such that $\tilde{X}_u \ne \tilde{X}_v$, $\lVert \tilde{U}_u - \tilde{U}_v \rVert \geq \hat{\beta} \sqrt{ \tilde{b}\tilde{\pi} }$. Similarly, for all  $u,v$ such that $\tilde{Y}_u \ne \tilde{Y}_v$, $\lVert \tilde{V}_u - \tilde{V}_v \rVert \geq \hat{\beta} \sqrt{ \tilde{b}\tilde{\pi} }$. Therefore,
    \begin{equation}
        \lVert \tilde{W}_u - \tilde{W}_v \rVert \geq \tilde{\beta} \frac{\sqrt{\sigma_d(\tilde{B}) \cdot \tilde{\pi} N}}{\sigma_1(\tilde{Q})} \geq \hat{\beta} \sqrt{ \tilde{b} \tilde{\pi} },
    \end{equation}
    for all $u,v$ such that $\theta(u) \ne \theta(v)$.
\end{lemma}
\medskip
\begin{proof}
    Define $\tilde{E} = \tilde{\Lambda}^{1/2} \tilde{R}^\top \in \mathbb{R}^{d \times K}$, $\tilde{G} = \tilde{X} \tilde{E} \in \mathbb{R}^{N \times K}$, and $\tilde{G}' = \tilde{G} D_n^{1/2} \in \mathbb{R}^{N \times K}$. Since $Q = Z B Z^\top$ and $
        \tilde{Q} := \tilde{X} \tilde{Y}^\top$ (see proof of Lemma \ref{lemma1}), we have
    \begin{align*}
        \tilde{Q} \tilde{Q}^\top &= \left( \tilde{X} \tilde{Y}^\top \right) \left( \tilde{X} \tilde{Y}^\top \right)^T \\
        &= \tilde{X} \tilde{Y}^\top \tilde{Y} \tilde{X}^\top \\
        &= \tilde{X} \left( \tilde{\Lambda}^{1/2}  \tilde{R}^\top Z^\top  Z \tilde{R} \tilde{\Lambda}^{1/2} \right) \tilde{X}^\top \\
        &= \tilde{X} \left( \tilde{\Lambda}^{1/2}  \tilde{R}^\top \right ) D_n  \left( \tilde{R} \tilde{\Lambda}^{1/2} \right) \tilde{X}^\top \\
        &= \tilde{X} \tilde{E} D_n  \tilde{E}^T \tilde{X}^\top\\ 
        &= \tilde{G} D_n \tilde{G}^\top 
        = \tilde{G}' \tilde{G}'^\top. 
    \end{align*}
    Moreover, using the SVD notations introduced at the start of this appendix,
    \begin{align*}
        \tilde{Q}\tilde{Q}^\top &= \tilde{U} \tilde{\Sigma} \tilde{V}^\top \tilde{V} \tilde{\Sigma} \tilde{U}^\top\\ 
        &= \tilde{U} \tilde{\Sigma}^2 \tilde{U}^\top \\
        &= \tilde{U}' \tilde{U}'^\top. 
    \end{align*}
    Thus, we have
    \begin{align*}
        \tilde{U}' \tilde{U}'^\top = \tilde{Q} \tilde{Q}^\top = \tilde{G}' \tilde{G}'^\top.
    \end{align*}
    Let $e \in \mathbb{R}^N$ be the vector whose $u$-th entry is 1, whose $v$-th entry is $-1$, and whose remaining entries are all zero. Then, we have
    \begin{align*}
        \lVert \tilde{G}'_u - \tilde{G}'_v \rVert^2 &= e^\top \tilde{G}'\tilde{G}'^\top e \\
        &= e^\top \tilde{U}'\tilde{U}'^\top e \\
        &= \lVert \tilde{U}'_u - \tilde{U}'_v \rVert^2.
    \end{align*}
    Therefore, from the definition of $\tilde{\beta},$
    \begin{align*}
        \tilde{\beta} &< \lVert \tilde{X}_u - \tilde{X}_v \rVert \\
        &\leq \frac{1}{\sqrt{\sigma_d(\tilde{B})}} \lVert \tilde{G}_u - \tilde{G}_v \rVert \qquad \qquad \textrm{since}\  \tilde{G} = \tilde{X}\tilde{\Lambda}^{1/2} \tilde{R}^T \\
        &\leq \frac{1}{\sqrt{\sigma_d(\tilde{B}) \cdot \min_{i \in \{ 1, \dots, d \}} n_i}} \lVert \tilde{G}'_u - \tilde{G}'_v \rVert \qquad \textrm{since}\ \tilde{G}' = \tilde{G} D_n^{1/2} \\
        &= \frac{1}{\sqrt{\sigma_d(\tilde{B}) \cdot \tilde{\pi} N}} \lVert \tilde{U}'_u - \tilde{U}'_v \rVert\\ 
        &\leq \frac{\sigma_1(\tilde{Q})}{\sqrt{\sigma_d(\tilde{B}) \cdot \tilde{\pi} N}} \lVert \tilde{U}_u - \tilde{U}_v \rVert \qquad \qquad \textrm{since}\ \tilde{U}' = \tilde{U} \tilde{\Sigma}.
    \end{align*}
    Therefore, we finally have
    \begin{align*}
        \lVert \tilde{U}_u - \tilde{U}_v \rVert &\geq \tilde{\beta} \frac{\sqrt{\sigma_d(\tilde{B}) \cdot \tilde{\pi} N}}{\sigma_1(\tilde{Q})} \\
        &\geq \hat{\beta} N \frac{ \sqrt{\tilde{b} \tilde{\pi} N^2} }{N^2} = \hat{\beta} \sqrt{ \tilde{b} \tilde{\pi} },
    \end{align*}
    for all $u,v$ such that $\theta(u) \ne \theta(v)$, where in the last inequality we have used the bounds on $\sigma_d(\tilde{B})$ and $\sigma_1(\tilde{Q})$ from Lemma \ref{lemma1}.
    \\
    
    Similar to the argument based on $\tilde{Q}\tilde{Q}^\top$, we also obtain that for all $u,v$ such that $\tilde{Y}_u \ne \tilde{Y_v}$, $\lVert \tilde{V}_u - \tilde{V}_v \rVert \geq \hat{\beta} \sqrt{ \tilde{b}\tilde{\pi} }$, in terms of considering $\tilde{Q}^\top \tilde{Q}$.
    \\
    
    If nodes $u$ and $v$ belong to different groups, then they are already separated by either the left or right singular vectors $\tilde{U}$ or $\tilde{V}$. Hence, for $\lVert \tilde{W}_u - \tilde{W}_v \rVert$, if $\theta(u) \ne \theta(v)$, then it holds either $\tilde{U}_u \ne \tilde{U}_v$ or $\tilde{V}_u \ne \tilde{V}_v$. Therefore, we have
    \[
    \lVert \tilde{W}_u - \tilde{W}_v \rVert \geq \hat{\beta} \sqrt{ \tilde{b}\tilde{\pi}}.
    \]

\end{proof}
\bigskip\noindent

Finally, the proof of Theorem \ref{thm_dir} on the misclustering rate of \dasec\ is as follows.\\

    Let $\hat{\nu}$ and $\hat{\theta}$ satisfy the clustering criterion for $\tilde{W}$ (where $\tilde{W} = [\tilde{U}|\tilde{V}]$ takes the role of $\hat{X}$ in Equation (\ref{eq:clust})). Let $C' \in \mathbb{R}^{N \times 2d}$ have row $u$ given by $C'_u = \hat{\nu}_{\theta(u)}$. From Equation (\ref{eq:clust}), we can say that
    \begin{equation}
        \label{eq:C'}
            \lVert C' - \tilde{W}'\rVert_F \leq \lVert \tilde{W} \tilde{O} - \tilde{W}'\rVert_F.
    \end{equation}
    We then have
    \begin{align} \nonumber
        \lVert C' - \tilde{W} \tilde{O} \rVert_F 
        &= \lVert C' - \tilde{W}' + \tilde{W}' - \tilde{W} \tilde{O} \rVert_F \\ \nonumber
        &\leq \lVert C' - \tilde{W}' \rVert_F + \lVert \tilde{W}' - \tilde{W} \tilde{O} \rVert_F \\ \nonumber
        &\leq \lVert \tilde{W} \tilde{O} - \tilde{W}' \rVert_F + \lVert \tilde{W}' - \tilde{W} \tilde{O} \rVert_F 
            \quad \textrm{using}\ \eqref{eq:C'} \\ \nonumber
        &= 2 \times \lVert \tilde{W} \tilde{O} - \tilde{W}' \rVert_F \\ \label{eq:C'F}
        &\leq 2 \times \frac{2 N(N-1)^2 \sqrt{\log N}}{\sigma_d^2(\tilde{Q})},
    \end{align}

using the bound from Lemma \ref{lemma2}.\\

Let $\mathcal{B}_1, \mathcal{B}_2, \dots, \mathcal{B}_K$ be balls of radius
    \[
    r = \frac{\tilde{\beta}}{3} \cdot \frac{\sqrt{\sigma_d(\tilde{B}) \cdot \tilde{\pi} N}}{\sigma_1(\tilde{Q})}
    \]
    each centered around the $K$ distinct rows of $\tilde{W}$. Similar to \cite{sussman2012consistent}, by Lemma \ref{lemma3}, these balls are almost always disjoint.
    \\
    
The condition for the number of misclustered nodes     $
    \min_{\rho \in H_K} | \{ u \in V : \theta(u) \ne \rho(\hat{\theta}(u)) \} |
    $ is thus equivalent to the number of rows $u$ which satisfy 
    $$
    \lVert C'_u - \tilde{W}_u \tilde{O} \rVert > r.
    $$

Let $\mathcal{M} = \{ u: \lVert C'_u - \tilde{W}_u \tilde{O} \rVert > r\}$.\\
        
Since 
    $
    \lVert C' - \tilde{W} \tilde{O} \rVert_F
    = \sqrt{ \sum_u \lVert C'_u - \tilde{W}_u \tilde{O} \rVert^2 },
    $
    it follows from \eqref{eq:C'F} that
    \begin{align*}
        \sqrt{\sum_u \lVert C'_u - \tilde{W}_u \tilde{O} \rVert^2 }
        &= \lVert C' - \tilde{W}\tilde{O} \rVert_F
        \leq \frac{2^2 N(N-1)^2 \sqrt{\log N}}{\sigma_d^2(\tilde{Q})}. \\
        \Rightarrow\quad
        \sum_u \lVert C'_u - \tilde{W}_u \tilde{O} \rVert^2
        &\leq \frac{2^4 N^2(N-1)^4 \log N}{\sigma_d^4(\tilde{Q})}.
    \end{align*}

    Then we have
    \begin{equation*}
    \sum_{u \in M} \lVert C'_u - \tilde{W}_u \tilde{O} \rVert^2 \geq |\mathcal{M}| r^2.
    \end{equation*}
    Therefore,
    \begin{equation*}
        |\mathcal{M}| r^2 \leq \sum_{u \in \mathcal{M}} \lVert C'_u - \tilde{W}_u \tilde{O} \rVert^2 \leq \sum_u \lVert C'_u - \tilde{W}_u \tilde{O} \rVert^2 \leq \frac{2^4 N^2(N-1)^4 \log N}{\sigma_d^4(\tilde{Q})}.
    \end{equation*}
    Hence, by substituting $
    r = \frac{\tilde{\beta}}{3} \cdot \frac{\sqrt{\sigma_d(\tilde{B}) \tilde{\pi} N}}{\sigma_1(\tilde{Q})}$, the number of the misculstered nodes is bounded as follows:
    \begin{align*}
        \lvert \mathcal{M} \rvert &\leq \frac{2^4 \, 3^2 \, \sigma_1^2(\tilde{Q}) N^2 (N-1)^4 \log N}{ \tilde{\beta}^2 \sigma_d(\tilde{B}) (\tilde{\pi} N) \sigma_d^4(\tilde{Q}) } \\
        &\leq \frac{2^4 \, 3^2 \, N^4 N^2 (N-1)^4 \log N}{ (\hat{\beta} N)^2 (\tilde{b}N) (\tilde{\pi}N) (\tilde{\pi} \tilde{b}N^2)^4 } \qquad \textrm{since}\ \sigma_1(\tilde{Q}) \leq N^2, \tilde{\beta} = \hat{\beta}N, \sigma_d(\tilde{B}) = \tilde{b}N, \ \textrm{and}\ \sigma_d(\tilde{Q}) \geq (\tilde{\pi} \tilde{b} N^2)  \\
        &= \frac{2^4 \, 3^2}{\hat{\beta}^2 (\tilde{b} \tilde{\pi})^5~} \cdot \frac{(N-1)^4}{N^4} \cdot \frac{\log N}{N^2},
    \end{align*}
    where the constants $\hat{b}, \tilde{b}, \tilde{\pi} > 0$ are not dependent on the network size $N$.
    
\end{proof}

\subsection{Proof of Corollary \ref{corrl_undirected}}\label{sec:proofcorrl_undirected}
\begin{proof}
    Since $A$ is undirected, the singular value decomposition of $\tilde{Q}$ takes the form $\tilde{Q} = \tilde{U} \tilde{\Sigma} \tilde{U}^\top$, and similarly $\tilde{A} = \tilde{U}' \tilde{\Sigma}' \tilde{U}'^\top$. Here, $\tilde{W} = \tilde{U} \in \mathbb{R}^{N \times d}$. Therefore, by the same arguments as in the proof of Lemma \ref{lemma2}, there exists an orthogonal matrix $\tilde{O}_1 \in \mathbb{R}^{d \times d}$ such that
    \begin{align*}
        \lVert \tilde{U} \tilde{O}_1 - \tilde{U}' \rVert_F \leq \frac{\sqrt{2} N(N-1)^2 \sqrt{\log N}}{\sigma_d^2(\tilde{Q})}.
    \end{align*}
    We can circumvent the direct sum \eqref{eq:directsum} (since $\tilde{V}$ is unnecessary for the form of $\tilde{W}$) to obtain the bound 
            \begin{align*}
        \lVert \tilde{W}\tilde{O} - \tilde{W}' \rVert_F &= \sqrt{\lVert \tilde{U}\tilde{O}_1 - \tilde{U}'\rVert^2_F}\\[3pt] 
        &\leq \frac{\sqrt{2} N(N-1)^2 \sqrt{ \log N}}{\sigma_d^2(\tilde{Q})}\\
        &\leq \frac{\sqrt{2}}{\tilde{\pi}^2 \tilde{b}^2} \frac{(N-1)^2}{N^2}\frac{\sqrt{\log N}}{N},
    \end{align*}
    by incorporating the eigenvalue bound from Lemma \ref{lemma1}.  In other words, the bound established by Lemma \ref{lemma2} only changes by a factor of $\sqrt{2}$ when moving from the directed to the undirected network case.  The proof of Theorem \ref{thm_dir} follows directly with this change (using $\tilde{W} = \tilde{U} \in \mathbb{R}^{N \times d}$) and thus it almost always holds that
    \begin{align*}
        \min_{\rho \in H_K} | \{ u \in V : \theta(u) \ne \rho(\hat{\theta}(u)) \} | \leq \frac{2^3 \, 3^2}{\hat{\beta}^2 (\tilde{b} \tilde{\pi})^5~} \cdot \frac{(N-1)^4}{N^4} \cdot \frac{\log N}{N^2},
    \end{align*}
i.e. the bound changes by a factor of $2$ when moving from the directed to the undirected network case.
\end{proof}

\subsection{Proof of Theorem \ref{thm_dir_core}}
\label{sec:proofthmcore}



In this section, we focus on the consistency of DASE for the specific case of core-periphery structure. Recall from \eqref{eq:Q} that the matrix $Q$ can be written
\[
Q = 
\begin{bmatrix}
    p J_{n_1 \times n_1} & q J_{n_1 \times n_2} \\
    r J_{n_2 \times n_1} & s J_{n_2 \times n_2}
\end{bmatrix}.
\]

Recall that from Section \ref{sec:cpclust} for the core-periphery structure, we assume that
\[
0 \leq s < q,r < p \leq 1.
\]

\begin{proof}
As in the general graph setting, we begin by proving Proposition \ref{proposition_core} for core-periphery graphs, which establishes consistency of $\tilde{A}\tilde{A}^T$ for the specific case of (directed) core-periphery networks.

\begin{proposition}
\label{proposition_core} 
    Let $A^{(N)} \in \{0,1\}^{N \times N}$ be a sequence of adjacency matrices of a directed graph, and let $\tilde{A}^{(N)} = A^{(N)}A^{(N)} \in [N]^{N \times N}$ denote the corresponding sequence doubled adjacency matrices. Then, the following holds with probability at least $1-4N^{-2}$:
    \begin{equation}
         \lVert \tilde{A}^{(N)}\tilde{A}^{(N)}\!\!\phantom{.}^\top - \tilde{Q}^{(N)}\tilde{Q}^{(N)}\!\!\phantom{.}^\top \rVert_F \leq (p^4 - s^4) N(N-1)^2 \sqrt{\log N}
    \end{equation}
    and
    \begin{equation}
         \lVert \tilde{A}^{(N)}\!\!\phantom{.}^\top\tilde{A}^{(N)} - \tilde{Q}^{(N)}\!\!\phantom{.}^\top\tilde{Q}^{(N)} \rVert_F \leq (p^4 - s^4) N(N-1)^2 \sqrt{\log N}.
    \end{equation}
\end{proposition}
\medskip
\begin{proof}
Similar to the proof of Proposition \ref{proposition1}, we drop explicit dependence on the nodes $N$, and obtain bounds for $\lVert \tilde{A} \tilde{A}^\top - \tilde{Q} \tilde{Q}^\top \rVert_F$ for the two cases $u \ne v$ and $u=v$ separately.  The proof differs from the proof of Proposition \ref{proposition1} in that our bounds will be derived from the inequalities on the probabilities $p$, $q$, $r$ and $s$ for core-periphery networks.
\\
\\
\underline{Case I: $u\neq v$.}\newline
\\
We use the three subsets $\mathcal{S}_1, \mathcal{S}_2$ and $\mathcal{S}_3$, where $\mathcal{S}_1 \, \cup \, \mathcal{S}_2 \, \cup \, \mathcal{S}_3 = [N]^2$, as defined in the proof of Proposition \ref{proposition1}  (Case I).
\begin{itemize}\item[(i)]
    For $(i,j) \in \mathcal{S}_1$, we showed in \eqref{eq:devS1a} that the deviation in \eqref{eq:deviation} restricted to $(i,j)\in \mathcal{S}_1$ is 
    \begin{align}\label{eq:devS1a_core}
        0 - \sum_{(i,j) \in \mathcal{S}_1} \sum_{w=1}^N Q_{ui} Q_{iw} Q_{vj} Q_{jw},
    \end{align}
where $|\mathcal{S}_1| = 3N-2$ when $w\in\{u,v\}$ and $|\mathcal{S}_1| = 4N-4$ when $w\notin\{u,v\}$.  Since $s^4 \leq Q_{ui} Q_{iw} Q_{vj} Q_{jw} \leq p^4$, we have
    \begin{align}\label{eq:devS1bounda_core}
       s^4 \big( (2N-2)^2+2N \big) \leq \sum_{(i,j) \in \mathcal{S}_1} \sum_{w=1}^N \mathcal{Q}_{ijw}^{(u,v)} = \sum_{(i,j) \in \mathcal{S}_1} \sum_{w\in \{u,v\}} \mathcal{Q}_{ijw}^{(u,v)}+\sum_{(i,j) \in \mathcal{S}_1} \sum_{w\notin \{u,v\} } \mathcal{Q}_{ijw}^{(u,v)}\leq  p^4 \big( (2N-2)^2+2N \big),
    \end{align}
    where $\mathcal{Q}_{wij}^{(u,v)} = \mathbb{E}\left( \mathcal{A}_{wij}^{(u,v)} \right) = Q_{ui} Q_{iw} Q_{vj} Q_{jw}$. 
  \item[(ii)]  For $(i,j) \in \mathcal{S}_2$, recall from \eqref{eq:devS2a} that 
the deviation \eqref{eq:deviation} becomes
    \begin{align}\label{eq:devS2a_core}
        \sum_{i \notin \{u,v,w \}} \sum_{w=1}^N \left( A_{ui} A_{iw}^2 A_{vi} - Q_{ui} Q_{iw} Q_{vi} \right)=\sum_{i \notin \{u,v,w \}} \sum_{w=1}^N \xi'_{iw},
    \end{align}

defining $\xi'_{iw} := A_{ui} A^2_{iw} A_{vi} - Q_{ui} Q_{iw} Q_{vi}$, and where we note that $\mathbb{E}(\xi'_{iw}) = 0$ and $-p^3 \leq \xi'_{iw} \leq 1-s^3$.

The $\{\xi'_{iw}\}$ in \eqref{eq:devS2a_core} are a set of $2\times (N-2) + (N-2)(N-3) = (N-1)(N-2)$ bounded and conditionally independent random variables given $Q$, and thus by Hoeffding's inequality

\begin{equation}\label{eq:hoeffdingS2a_core}
    \mathbb{P} \left( \lvert (\tilde{A} \tilde{A}^\top)_{uv} - (\tilde{Q} \tilde{Q}^\top)_{uv} \rvert \geq  (1-s^3+p^3)\sqrt{2(N-1)(N-2) \log N} \mid Q \right) \leq 2N^{-4}.
\end{equation}
\medskip

\item[(iii)]    When $(i,j) \in \mathcal{S}_3$, the deviation for $\mathcal{S}_3$ is of the form 
    \begin{align}\label{eq:devS3a_core}
        \sum_{(i,j) \in \mathcal{S}_3} \sum_{w=1}^N \left( A_{ui} A_{iw} A_{vj} A_{jw} - Q_{ui} Q_{iw} Q_{vj} Q_{jw} \right)=\sum_{(i,j) \in \mathcal{S}_3} \sum_{w=1}^N \xi''_{ijw}, 
    \end{align}
where $\xi''_{ijw} = A_{ui} A_{iw} A_{vj} A_{jw} - Q_{ui} Q_{iw} Q_{vj} Q_{jw}$. Similar to above, these random variables are bounded as $\mathbb{E}(\xi''_{ijw})=0$ and $-p^4 \leq \xi''_{iw} \leq 1-s^4$. Since $\{\xi''_{ijw}\}$ form a set of $(N-2)[(N-2)(N-1)+1]$ conditionally independent random variables given $Q$, and thus by Hoeffding's inequality 

\begin{equation}\label{eq:hoeffdingS3a_core}
    \mathbb{P} \left(  \lvert (\tilde{A} \tilde{A}^\top)_{uv} - (\tilde{Q} \tilde{Q}^\top)_{uv} \rvert  \geq (1-s^4+p^4)\sqrt{2(N-2)[(N-1)(N-2)+1] \log N } \mid Q \right) \leq 2N^{-4}.
\end{equation}
\medskip
\end{itemize}

Since 
    \begin{align*}
        (\tilde{A} \tilde{A}^\top)_{uv} - (\tilde{Q} \tilde{Q}^\top)_{uv} &= \sum_{(i,j) \in \mathcal{S}_1} \sum_{w=1}^N  \left( \mathcal{A}_{wij}^{(u,v)} - \mathcal{Q}_{wij}^{(u,v)} \right) +   \sum_{(i,j) \in \mathcal{S}_2} \sum_{w=1}^N\left( \mathcal{A}_{wij}^{(u,v)} - \mathcal{Q}_{wij}^{(u,v)} \right) + \sum_{(i,j) \in \mathcal{S}_3} \sum_{w=1}^N\left( \mathcal{A}_{wij}^{(u,v)} - \mathcal{Q}_{wij}^{(u,v)} \right), 
    \end{align*}
taking the union bound over $\mathcal{S}_1$, $\mathcal{S}_2$ and $\mathcal{S}_3$, combining \eqref{eq:devS1bounda_core}, \eqref{eq:hoeffdingS2a_core} and \eqref{eq:hoeffdingS3a_core}, we have
\begin{equation}\label{eq:unoteqv_core}
    \mathbb{P} \left( \lvert (\tilde{A} \tilde{A}^\top)_{uv} - (\tilde{Q} \tilde{Q}^\top)_{uv} \rvert^2 \geq (p^4 - s^4)^2(N-1)^4\log N \mid Q  \right) \leq 4N^{-4} \qquad \text{for sufficiently large $N$}.
\end{equation}
\medskip

\underline{Case II: $u=v$.}\newline
\\
Similar to Case II in the proof of Proposition \ref{proposition1}, we also consider the deviation \ref{eq:deviation} for each of subsets $\mathcal{S}_1, \mathcal{S}_2$ and $\mathcal{S}_3$ (with $u=v$).

\begin{itemize}\item[(i)]
    When $w=u$, $|\mathcal{S}_1| = 2N-2$ and when $w\neq u$, $|\mathcal{S}_1| = 4N-4$; hence for $(i,j) \in \mathcal{S}_1$, again considering the presence of self-loops, the deviation has the form
 $$       0 - \sum_{(i,j) \in \mathcal{S}_1} \sum_{w=1}^N Q_{ui} Q_{iw} Q_{uj} Q_{jw},$$
with
    \begin{align}\label{eq:devS1boundb_core}
       s^4 N_{\mathcal{S}_1} \leq \sum_{(i,j) \in \mathcal{S}_1} \sum_{w=1}^N \mathcal{Q}_{ijw}^{(u,v)} \leq 1\times (2N-1) + (N-1) \times 4(N-1) = p^4 N_{\mathcal{S}_1},
    \end{align}
    where $N_{\mathcal{S}_1}: = 4(N-1)^2 + 2N-1$.
  \item[(ii)]  Recall from \eqref{eq:devS2b} that for $(i,j) \in \mathcal{S}_2$, the deviation \eqref{eq:deviation} becomes
 $$      \sum_{i \notin \mathcal{I}} \sum_{w=1}^N \left( A_{ui}^2 A_{iw}^2 - Q_{ui} Q_{iw}  \right),
 $$
where $A^2_{ui}$ and $A^2_{iw}$ are conditionally independent Bernoulli random variables given $Q$ with expectation $Q_{ui} Q_{iw}$.
\\
\\
Define $\xi'''_{iw} := A^2_{ui} A^2_{iw} - Q_{ui} Q_{iw}$, where $-p^2 \leq \xi'''_{iw} \leq 1-s^2$ and $\xi'''_{iw}$ is zero mean for all $i,w$. \\
\\
By similar counting arguments to Case I (considering when $w=u$ and $w\neq u$), the summands $\{\xi'''_{iw}\}$ in \eqref{eq:devS2b} form a set of $1\times (N-1) + (N-1)(N-2) = (N-1)^2$ bounded, zero-mean conditionally independent random variables given $Q$, and thus by Hoeffding's inequality 

\begin{equation}\label{eq:hoeffdingS2b_core}
    \mathbb{P} \left(  \lvert (\tilde{A} \tilde{A}^\top)_{uu} - (\tilde{Q} \tilde{Q}^\top)_{uu} \rvert \geq  (1-s^2+p^2)\sqrt{2(N-1)^2 \log} N \mid Q \right) \leq 2N^{-4}.
\end{equation}
\medskip

\item[(iii)]    Similar to Case I, when $(i,j) \in \mathcal{S}_3$ the deviation is of the form 
    \begin{align}\label{eq:devS3b_core}
        \sum_{(i,j) \in \mathcal{S}_3} \sum_{w=1}^N \left( A_{ui} A_{iw} A_{uj} A_{jw} - Q_{ui} Q_{iw} Q_{uj} Q_{jw} \right)= \sum_{(i,j) \in \mathcal{S}_3} \sum_{w=1}^N \xi''''_{ijw}, 
    \end{align}
where similar to (ii) above, $\xi''''_{ijw}$ is zero mean and $-p^4 \leq \xi''''_{ijw} \leq 1-s^4$.

The summands in \eqref{eq:devS3b_core} form a set of $1\times (N-1)(N-2) + (N-1)[(N-2)(N-3)] = (N-1)(N-2)^2$ conditionally independent, bounded random variables given $Q$, and thus by Hoeffding's inequality 

\begin{equation}\label{eq:hoeffdingS3b_core}
    \mathbb{P} \left( \lvert (\tilde{A} \tilde{A}^\top)_{uu} - (\tilde{Q} \tilde{Q}^\top)_{uu} \rvert \geq (1-s^4+p^4)\sqrt{ 2(N-1)(N-2)^2 \log N } \mid Q \right) \leq 2N^{-4}.
\end{equation}
\end{itemize}
\medskip

By taking the union bound over $\mathcal{S}_1$, $\mathcal{S}_2$ and $\mathcal{S}_3$, combining \eqref{eq:devS1boundb_core}, \eqref{eq:hoeffdingS2b_core} and \eqref{eq:hoeffdingS3b_core}, we have
\begin{equation}\label{eq:ueqv_core}
    \mathbb{P} \left( \lvert (\tilde{A}\tilde{A}^\top)_{uv} - (\tilde{Q}\tilde{Q}^\top)_{uv} \rvert^2 \geq (p^4 - s^4)^2(N-1)^4\log N \mid Q  \right) \leq 4N^{-4} \qquad \text{for sufficiently large $N$}.
\end{equation}

Finally, taking the union bound over all $(u,v) \in [N]^2$ using \eqref{eq:unoteqv_core} and \eqref{eq:ueqv_core}, we obtain the deviation bound for $\tilde{A} \tilde{A}^\top - \tilde{Q}\tilde{Q}^\top$ in the Frobenius norm as
\begin{equation} \label{eq:finalbound_core}
    \mathbb{P} \left( \lVert \tilde{A}\tilde{A}^\top - \tilde{Q}\tilde{Q}^\top \rVert^2_F \geq (p^4 - s^4)^2 N^2(N-1)^4 \log N \mid Q  \right) \leq 4N^{-2}.
\end{equation}
\medskip

Therefore, with probability at least $1-4N^{-2}$, we have
\begin{equation}
    \lVert \tilde{A}\tilde{A}^\top - \tilde{Q}\tilde{Q}^\top \rVert_F \leq (p^4 - s^4)N(N-1)^2 \sqrt{\log N}.
\end{equation}
The proof for $\lVert \tilde{A}^\top \tilde{A} - \tilde{Q}^\top \tilde{Q} \rVert_F$ follows analogously.

\end{proof}
\bigskip\noindent
Note that the difference between the bounds in Proposition \ref{proposition1} and Proposition \ref{proposition_core} is simply a factor of $(p^4 - s^4)^2$.  Thus the rest of the proof of Theorem \ref{thm_dir_core} follows directly using the exact same arguments as the proof of Theorem \ref{thm_dir}, in particular invoking the results of Lemma \ref{lemma2} and Lemma \ref{lemma3} with this adjustment in the bounds.

\end{proof}

\subsubsection{Proof of Corollary \ref{corrl_core_ase}}
\label{appendix:core_ase}

\begin{proof}
In this section, we focus on the proof of Corollary \ref{corrl_core_ase}, which establishes the misclustering rate for ASE for core-periphery networks. The proof follows the same structure as that of DASE and is based on the general network ASE analysis in \cite{sussman2012consistent}.

\begin{proposition}
\label{proposition_ase}
    Let $A \in \{0,1\}^{N \times N}$ be an adjacency matrix of a directed graph. Let $Q = \mathbb{E}(A)$ and the entries of $Q$ is in a range $[s, p]$, where $0 \leq s < p \leq 1$. Suppose that $A$ is conditionally independent given $Q$. Then, the following holds with probability at least $1-2N^{-2}$:
    \begin{equation}
         \lVert AA^\top  - QQ^\top \rVert_F \leq (1-s^2+p^2) \sqrt{3 N^3 \log N},
    \end{equation}
    and
    \begin{equation}
         \lVert A^\top A - Q^\top Q \rVert_F \leq (1-s^2+p^2) \sqrt{3 N^3 \log N}.
    \end{equation}

\end{proposition}
\medskip
\begin{proof}
    Similar to \cite{sussman2012consistent}, we wish to bound the deviation
    \begin{equation}
        \label{eq:deviation_ase}
        (AA^\top)_{uv} - (QQ^\top)_{uv} = \sum_{w = 1}^N A_{uw}A_{vw} = \sum_{w=1}^N Q_{uw} Q_{vw}.
    \end{equation}
    We consider the two cases $u \ne v$ and $u=v$ separately.
    \\
    \\
    \underline{Case I: $u\neq v$.}\newline \\
    Since $A_{uw}A_{vw}$ is conditionally independent when $w \notin \{u,v\}$, we have
    \begin{equation}
        (AA^\top)_{uv} - (QQ^\top )_{uv} = \sum_{w \notin \{u,v\}} (A_{uw}A_{vw} - Q_{uw}Q_{vw}) - (Q_{uu}Q_{vu} + Q_{uv}Q_{vv}). 
    \end{equation}
    
    Define the zero-mean random variable $\mathcal{E}_{w} := A_{uw}A_{vw} - Q_{uw}Q_{vw}$ with $-p^2 \leq \mathcal{E}_w \leq 1-s^2$.
    \\

    By Hoeffding's inequality, we have
    \begin{equation} \label{eq:hoeff_ase}
        \mathbb{P} \left( \lvert (AA^\top)_{uv} - (QQ^\top)_{uv} \rvert^2 \geq 2(1-s^2+p^2)^2(N-2)\log N + 2(1-s^2+p^2)N +4 \right) \leq 2N^{-4}.
    \end{equation}

    \underline{Case II: $u= v$.}\newline \\
    When $u=v$, we have
    \begin{equation} \label{eq:ueqv_ase}
        \left( (AA^\top)_{uu} - (QQ^\top)_{uu} \right)^2 \leq (1-s^2)^2N^2.
    \end{equation}

    Finally, taking the union bound over all $(u,v) \in [N]^2$ using \eqref{eq:hoeff_ase} and \eqref{eq:ueqv_ase}, we obtain the deviation bound for $AA^\top - QQ^\top $ in the Frobenius norm as
    \begin{equation} \label{eq:frob_ase}
        \mathbb{P} \left( \lVert AA^\top  - QQ^\top \rVert_F^2 \geq 3(1-s^2+p^2)^2 N^3 \log N \right) \leq 2N^{-2},
    \end{equation}
    since $2(1-s^2+p^2)N^3 + (1-s^2)^2N^3 + 4N^2 \leq (1-s^2+p^2)N^3 \log N$ for sufficiently large $N$.
    \\
    \\
    Therefore, with probability $1-2N^{-2}$, we have
    \begin{equation}
        \lVert AA^\top  - QQ^\top \rVert_F \leq (1-s^2+p^2) \sqrt{3 N^3 \log N}.
    \end{equation}
    The proof for $\lVert A^\top A  - Q^\top Q \rVert_F$ follows analogously.
    
\end{proof}
\bigskip\noindent

\begin{lemma}
\label{lemma9}
    It almost always holds that there exists an orthogonal matrix $O \in \mathbb{R}^{2d \times 2d}$ such that
    \begin{align}
        \lVert WO - W' \rVert_F &\leq \frac{2(1-s^2+p^2)\sqrt{3 \, N^3 \log N}}{\sigma_2^2(Q)} \\
        &\leq \frac{2\sqrt{3} \, (1-s^2+p^2)}{\tilde{\pi}^2 b^2} \cdot \sqrt{\frac{\log N}{N}},
    \end{align}
    where $\lambda_d(B) \geq b$ with some constant $b>0$.
\end{lemma}

\begin{proof}
    Similar to (the proof of) Lemma \ref{lemma2}, using the result from Proposition \ref{proposition_ase}, $\lVert UR_1 - U' \rVert$ is bounded as
    \begin{align*}
        \lVert U O_1 - U' \rVert_F \leq \frac{\sqrt{2}}{\sigma_2^2(Q)} \lVert AA^\top - QQ^\top \rVert_F \leq 
        \frac{(1-s^2+p^2)\sqrt{2 \cdot 3 \, N^3 \log N}}{\sigma_2^2(Q)},
    \end{align*}
    where an orthogonal matrix $O_1 \in \mathbb{R}^{d \times d}$. For the bound $\lVert VO_2 - V' \rVert_F$,
    \begin{align*}
        \lVert VO_2 - V' \rVert_F \leq \frac{(1-s^2+p^2)\sqrt{2 \cdot 3 \, N^3 \log N}}{\sigma_2^2(Q)},
    \end{align*}
    where an orthogonal matrix $O_2 \in \mathbb{R}^{d \times d}$. Therefore, by the direct sum, there exists an orthogonal matrix $O \in \mathbb{R}^{2d \times 2d}$, we have
    \begin{align*}
        \lVert WO - W' \rVert_F \leq \frac{2(1-s^2+p^2)\sqrt{3 \, N^3 \log N}}{\sigma_2^2(Q)}.
    \end{align*}
    \bigskip
    
    Since $\sigma_2^2(Q) \geq \left( \tilde{\pi} b N \right)^2$ from Lemma 1 in \cite{sussman2012consistent}, we thus have
    \begin{equation*}
        \lVert WO - W' \rVert_F \leq \frac{2\sqrt{3} \, (1-s^2+p^2)}{\tilde{\pi}^2 b^2} \cdot \sqrt{\frac{\log N}{N}},
    \end{equation*}
    where $\lambda_d(B) \geq b$ with some constant $b>0$.
\end{proof}
\bigskip

We now complete the proof of Corollary \ref{corrl_core_ase} as follows.\\

Analogously to Lemma \ref{lemma3}, by Lemma $3$ in \cite{sussman2012consistent}, we have that $\lVert W_u - W_v \rVert \geq \beta \sqrt{b\tilde{\pi}} N^{-1/2}$ for all $u,v$ such that $\theta(u) \ne \theta(v)$, which means that we can define balls of radius $r = \frac{\beta}{3} \sqrt{b \tilde{\pi}} N^{-1/2}$ centered on the distinct rows of $W$ and these balls are disjoint. 

Following similar arguments to the proof of Theorem \ref{thm_dir_core} (see also Theorem 1 in \cite{sussman2012consistent}), we have 
    \[
    \sum_u \lVert C_u' - W_u O \rVert^2 \leq \frac{2^4 \, 3 \, (1-s^2+p^2)^2 \, N^3 \log N}{\sigma_2^4(Q)}.
    \]

    Considering $\mathcal{M} = \{ u: \lVert C'_u - W_u O \rVert > r \}$ be a set of the misclustered nodes, we have
    \[
    \sum_{u \in \mathcal{M}} \lVert C'_u - W_u O \rVert^2 \geq \lvert \mathcal{M} \rvert r^2.
    \]
    Therefore,
    \[
    \lvert \mathcal{M} \rvert r^2 \leq \sum_u \lVert C_u' - W_u O \rVert^2 \leq \frac{2^4 \, 3 \, (1-s^2+p^2)^2 \, N^3 \log N}{\sigma_2^4(Q)}
    \]
    By substituting
    \[
    r = \frac{\beta}{3} \sqrt{b \tilde{\pi}} N^{-1/2},
    \]
    the number of misclustered nodes is bounded as follows:
    \begin{align*}
        \lvert \mathcal{M} \rvert &\leq \frac{2^4 \, 3^3 \, (1-s^2+p^2)^2 \, N^3 \log N}{\beta^2 (b\tilde{\pi}) N^{-1} \sigma_2^4(Q)} \\
        &\leq \frac{2^4 \, 3^3 \, (1-s^2+p^2)^2 \, N^4 \log N}{\beta^2 (b\tilde{\pi}) (b\tilde{\pi}N)^4} \qquad (\because \sigma_d(Q) \geq b \tilde{\pi} N) \\
        &= \frac{2^4 \, 3^3 \, (1-s^2+p^2)^2}{\beta^2 (b\tilde{\pi})^5} \, \log N,
    \end{align*}
    where constants $b, \beta, \tilde{\pi}> 0$ are not dependent on the network size $N$.
\end{proof}

\section{Additional Simulation Results}
\label{appendix:sims}
In this section we provide details of additional simulation results using the Gaussian Mixture Model (GMM) in place of $k$-means in Step 6 (respectively Step 7) of \asec\ in Algorithm \ref{alg:ASE} (respectively \dasec\ in Algorithm \ref{alg:DASE}).  More specifically, Section \ref{appendix:directed} describes findings for directed graphs, whereas Section \ref{appendix:undirected} considers the undirected graph case. For both settings, similar to Section \ref{sec:sims}, the performance of the clustering methods are represented in figures as follows: spectral clustering (circle, blue); \asec\ (triangle, orange) and \dasec\ (square, green).

Similar to the $k$-means clustering simulations in Section \ref{sec:sims}, we consider three types of simulations: (1) fixing the network size and varying the network density with balanced networks, (2) fixing the network density and varying the network size with balanced networks, and (3) fixing the network size and density but varying the size of core groups ($\pi_1$).

\subsection{Directed Graphs}
\label{appendix:directed}
Figure \ref{fig:appendix_dir} shows the clustering performance measured by the average Normalised Mutual Information (NMI), along with the corresponding standard deviations (shaded areas) over $n_{rep}=50$ simulated graphs. When the network size is fixed and the network density varies, both \asec\ and \dasec\ recover the true labels even when the network density is sparse, as shown in Figure \ref{fig:appendix_Fig1a}. However, the standard deviation of \asec\ is larger than that of \dasec, indicating that \dasec\ is more stable in sparse settings.

For fixed network density and varying network sizes, the simulation results are shown in Figures \ref{fig:appendix_Fig1b}. As seen in Figure \ref{fig:appendix_Fig1b}, both \asec\ and \dasec\ outperform spectral clustering, as expected. Once again, the standard deviation (shaded area) of \asec\ is again larger than that of \dasec, confirming that \dasec\ provides a smaller clustering uncertainty than \asec.

\begin{figure*}[!h]
    \centering
    \begin{subfigure}{0.4\textwidth}
        \centering
        \includegraphics[width=\linewidth]{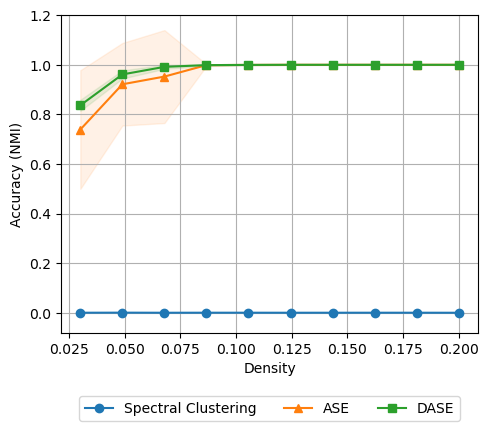}
        \caption{NMI with fixed network size}
        \label{fig:appendix_Fig1a}

    \end{subfigure}
    \begin{subfigure}{0.4\textwidth}
        \centering
        \includegraphics[width=\linewidth]{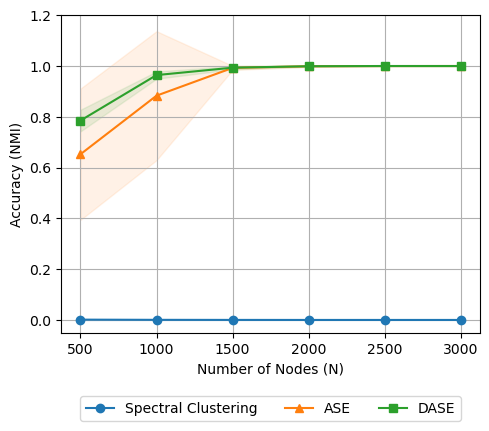}
        \caption{NMI with fixed network density}
        \label{fig:appendix_Fig1b}
    \end{subfigure}


    \caption{Figures illustrating the comparison of clustering performance on directed graphs using GMM in terms of mean accuracy (NMI), corresponding standard deviation over $n_{rep}=50$ iteration when $K=2$ with $\pi = (0.5, 0.5)^\top$: (a) NMI with fixed network size ($N=1,000$) and varying network density; (b) NMI with fixed expected network density ($\alpha = 0.05$) and varying network sizes. In (a) and (b), shaded areas represent the standard deviations of the NMI values over the $n_{rep}$ simulated graphs.}
    \label{fig:appendix_dir}
\end{figure*}

We also consider unbalanced two-cluster networks, as shown in Figure \ref{fig:appendix_Fig2}. Across different proportions of the core group, \asec\ almost recovers the true labels, but its performance fluctuates substantially, and its standard deviation is larger than that of \dasec. In contrast, DASE not only achieves better clustering accuracy than \asec\ but also remains more stable across different cluster size ratios. Across different core cluster proportions, \dasec\ exhibits its largest standard deviations and reduced clustering performance when $\pi_1 = 0.9$, where the network structure resembles a single dense group. However, overall these results confirm that \dasec\ remains stable even in unbalanced settings.

\begin{figure}[!h]
    \centering
    \includegraphics[width=0.4\linewidth]{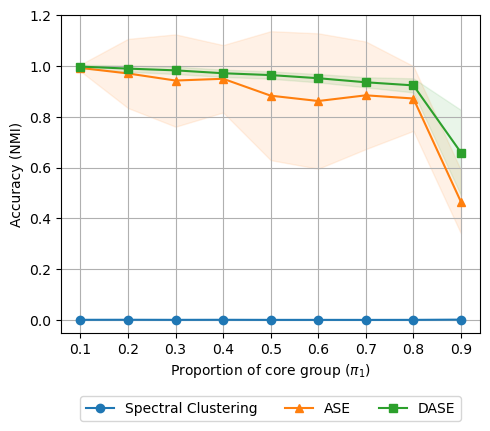}
    \caption{Figure showing the comparison of clustering performance on directed graphs using GMM in terms of mean NMI (line) and the corresponding standard deviation (shaded area) over $n_{rep}=50$ simlulated graphs when $K=2$. In the simulation, the network size is fixed at $(N=1,000)$, and the block probability matrix $B$ is fixed, while varying the core group ratio ($\pi_1$) from $0.1$ to $0.9$.}
    \label{fig:appendix_Fig2}
\end{figure}

\subsection{Undirected Graphs}
\label{appendix:undirected}
We next consider undirected, balanced two-cluster networks to evaluate the clustering performance of spectral clustering, \asec\ and \dasec\ for the GMM clustering technique.  The results are shown in Figure \ref{fig:appendix_Fig3}. Here, we focus on the clustering accuracy under fixed network size and fixed network density, presented in Figures \ref{fig:appendix_Fig3a} and \ref{fig:appendix_Fig3b}, respectively.

\begin{figure}[!h]
    \centering
    \begin{subfigure}{0.4\textwidth}
        \centering
        \includegraphics[width=\linewidth]{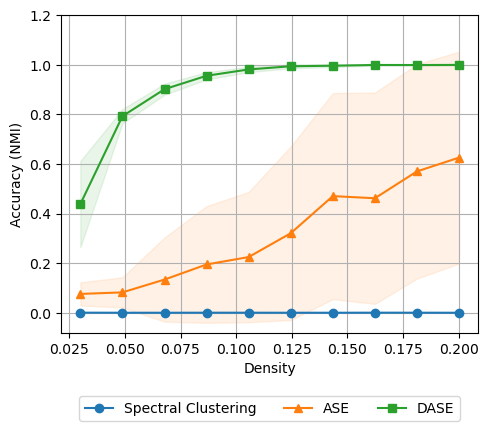}
        \caption{NMI with fixed network size}
        \label{fig:appendix_Fig3a}

    \end{subfigure}
    \begin{subfigure}{0.4\textwidth}
        \centering
        \includegraphics[width=\linewidth]{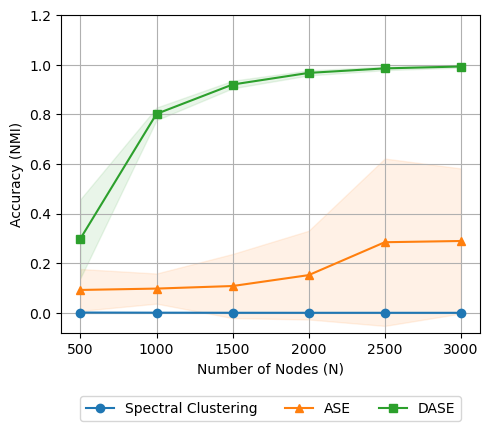}
        \caption{NMI with fixed network density}
        \label{fig:appendix_Fig3b}
    \end{subfigure}   

    \caption{Figures illustrating the comparison of clustering performance on undirected graphs using GMM in terms of mean accuracy (NMI), corresponding standard deviation, and mean computational cost over $50$ iteration when $K=2$ with $\pi = (0.5, 0.5)^\top$: (a) NMI with fixed network size ($N=1,000$) and varying network density; (b) NMI with fixed expected network density ($\alpha = 0.05$) and varying network sizes. In (a) and (b), shaded areas represent the standard deviations of the NMI values over the $n_{rep}$ simulated graphs.}
    \label{fig:appendix_Fig3}
\end{figure}

For fixed network size and varying network density in Figure \ref{fig:appendix_Fig3a}, \asec\ performs better than spectral clustering but worse than \dasec, even when the network is not particularly sparse. In addition, the standard deviation of the clustering performance for \asec\ is larger than that of \dasec. For fixed network density and varying network size, shown in Figure \ref{fig:appendix_Fig3b}, the results are consistent with those under fixed network size: \dasec\ successfully recovers the true label when $N>2,5000$, whereas the accuracy of \asec\ falls below $0.4$ when $N=3,000$.

We also examine unbalanced, undirected two-cluster networks, as shown in Figure \ref{fig:appendix_Fig4}. As expected, \dasec\ consistently outperforms \asec\ and spectral clustering across different core-group ratios. In contrast, clustering performance of \asec\ drops when the two groups are closer in size and increases again as the core group becomes dominant. \dasec, on the other hand, performs well except $\pi_1 \geq 0.7$, where the core group is dominant. Although its accuracy decreases in these cases, it still performs better than \asec\ across all settings.\\


\begin{figure}[ht]
    \centering
    \includegraphics[width=0.4\linewidth]{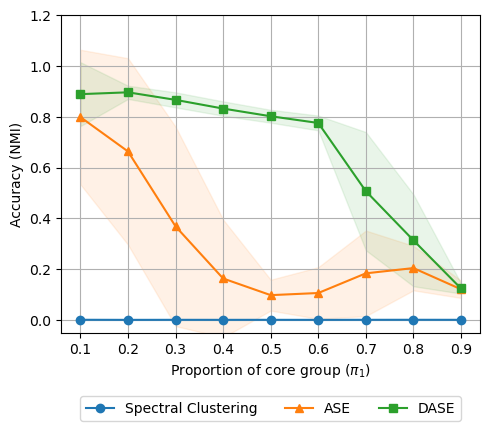}
    \caption{Figure presenting the comparison of clustering performance on undirected graphs using GMM in terms of mean NMI (line) and the corresponding standard deviation (shaded area) over $50$ iterations when $K=2$. In the simulation, the network size is fixed at $(N=1,000)$, and the block probability matrix $B$ is fixed, while varying the core group ratio ($\pi_1$) from $0.1$ to $0.9$.}
    \label{fig:appendix_Fig4}
\end{figure}

Similar to Section \ref{sec:sims}, we note here that the clustering methods achieve better clustering performance for directed graphs than for undirected graphs, due to the fact that converting a directed network into an undirected one leads to  loss of edge information, and thus this drop in performance is to be expected.

\end{document}